\documentclass{article}

\usepackage{times}
\usepackage{soul}
\usepackage{url}
\usepackage[hidelinks]{hyperref}
\usepackage[utf8]{inputenc}
\usepackage[small]{caption}
\usepackage{graphicx}
\usepackage{amsmath}
\usepackage{amsthm}
\usepackage{booktabs}
\usepackage{tikz}
\usepackage{tikz-3dplot}
\usepackage{comment}

\usetikzlibrary{3d,calc}
\usepackage{xcolor}

\usepackage{algorithm}
\usepackage{algorithmicx}
\usepackage{algpseudocode}

\algnewcommand{\algorithmicforeach}{\textbf{foreach}}
\algdef{SE}[FOR]{ForEach}{EndForEach}[1]{\algorithmicforeach\ #1\ \algorithmicdo}{\algorithmicend\ \algorithmicforeach}

\usepackage[switch]{lineno}

\usepackage{amsthm, amssymb}
\usepackage{thmtools}

\usepackage{mathtools}
\usepackage{cleveref}

\newcommand{\cqed}{\renewcommand{\qedsymbol}{$\lrcorner$}\qed}

\theoremstyle{definition}

\newtheorem{theorem}{Theorem}
\usepackage{amsmath}
\usepackage{amsthm}
\usepackage{amsfonts} 

\usepackage{tikz}
\usetikzlibrary{intersections}
\usetikzlibrary{patterns}

\usepackage{todonotes}
\usepackage{xspace}
\newtheorem{lemma}[theorem]{Lemma}

\newtheorem{definition}{Definition}

\newcommand{\Oh}{\mathcal{O}}
\DeclareMathOperator{\poly}{poly}

\newcommand{\EnumerateSeparator}{\texttt{EnumerateSeparator}\xspace}
\newcommand{\Divide}{\texttt{Divide}\xspace}
\newcommand{\ExhaustiveSearch}{\texttt{ExhaustiveSearch}\xspace}
\newcommand{\ConstructEF}{\texttt{ConstructEF}\xspace}
\newcommand{\CheckAllocation}{\texttt{CheckAllocation}\xspace}
\newcommand{\Yield}{\textbf{yield}\xspace}
\newcommand{\None}{\ensuremath{\bot}}
\newcommand{\False}{\texttt{False}\xspace}
\newcommand{\True}{\texttt{True}\xspace}

\newcommand{\EF}{\textsc{EF}\xspace}
\newcommand{\EFone}{\textsc{EF1}\xspace}

\newcommand{\EFX}{\textsc{EFX}\xspace}

\newcommand{\MNW}{\textsc{MNW}\xspace}

\DeclareMathOperator{\operatorClassNP}{{\sf NP}}
\newcommand{\classNP}{\ensuremath{\operatorClassNP}}

\usepackage{PRIMEarxiv}

\usepackage[utf8]{inputenc} 
\usepackage[T1]{fontenc}    
\usepackage{hyperref}       
\usepackage{url}            
\usepackage{booktabs}       
\usepackage{amsfonts}       
\usepackage{nicefrac}       
\usepackage{microtype}      
\usepackage{lipsum}
\usepackage{fancyhdr}       
\usepackage{graphicx}       
\graphicspath{{media/}}     

\title{Subexponential Algorithm for High Multiplicity Fair Division of Mixed Instances via Stereometry}

\author{
  Yuriy Dementiev\\
  ITMO University
  \And
  Fedor Pribytkov\\
  St. Petersburg State University\\
  \And
  Danil Sagunov\\
  ITMO University\\
}

\begin{document}
\maketitle
\begin{abstract}
We study the problem of computing an envy-free~(\EF) allocation of $m$ indivisible items among $n$ agents when items come in three distinct types.
Each agent holds additive valuations over item types that may be positive (goods), negative (chores), or mixed. 
We present the first subexponential-time algorithm with running time time $(n \cdot m)^{\Oh(\sqrt{n})}$ that finds an \EF allocation whenever one exists, or correctly reports that none exists. 
Our approach exploits a geometric representation of \EF allocations as convex polyhedra in $\mathbb{R}^3$ and applies Miller's planar cycle-separator theorem to recursively decompose the agent set into balanced subgroups. 
We further extend the algorithm to handle agents whose allocations are fixed in advance, preserving envy-freeness across all agents. 


\end{abstract}

\section{Introduction}
Fair division of resources among self-interested agents is a foundational
problem in artificial intelligence and multi-agent
systems~\cite{moulin2004fair,brandt2016handbook}. Among fairness criteria,
\emph{envy-freeness}~(EF) is one of the most compelling: an allocation is
envy-free if every agent weakly prefers their own bundle to that of any other
agent. While EF allocations always exist for divisible
goods~\cite{steinhaus1948problem}, the indivisible setting is significantly more
challenging---an EF allocation may not exist at all, and deciding whether one
exists is computationally demanding in general.

Most work on envy-free allocation of indivisible goods either imposes special
structural assumptions (e.g., identical valuations, binary preferences, two
agents) or relaxes the fairness requirement to approximation notions such as
\EFone~\cite{lipton2004approximately,budish2011combinatorial}
(envy-freeness up to one item) or
\EFX~\cite{caragiannis2019unreasonable,EFX_PO}. Exact \EF for general additive
valuations over indivisible goods is known to be computationally
hard~\cite{bouveret2008efficiency}, and practical algorithms typically resort
to exhaustive search, which is exponential in the number of agents.

In this paper we study a natural and practically motivated special case: items
come in $t$ types, and each agent has additive valuations that depend only on
the \emph{type} of an item rather than on individual items. This captures
scenarios such as computing resource allocation across hardware tiers, food
distribution by nutritional category, or budget allocation across departments.
Crucially, our setting encompasses not only \emph{goods} (positive valuations)
but also \emph{chores} (negative valuations) and \emph{mixed instances} in
which some item types are goods for certain agents and chores for others. The only constraint is we forbid agents with zero valuation vectors.
The EF condition is uniform across all cases: every agent must weakly prefer
their own bundle over anyone else's. Despite the restriction to a fixed
number of types, the problem remains non-trivial: the space of feasible
allocations has size $((m_1+1) \cdot\ldots \cdot (m_t+1))^n$, where $m_i$ is the total number of items of type $i$, rendering brute-force search
infeasible for large inputs.

The number of types $t$ turns out to be the key parameter governing
tractability, and the known results reveal a striking complexity landscape.

For \emph{two types} ($t = 2$), was shown that an EF allocation can be found in polynomial time.
For \emph{many types} ($t = \Omega(m)$), the problem is hard even for two agents: exact \EF with two agents reduces to a variant of \textsc{Subset Sum}, and an ETH-based lower bound~\cite{cygan2015parameterized} implies that no $2^{o(m)}$-time algorithm exists. In particular, the gap between $t=2$ and $t=\Omega(m)$ cannot be closed without a breakthrough on ETH-hard problems.

The case $t = 3$ studied in this paper lies between these two extremes and requires fundamentally new techniques. For the special case of a constant number of types, Maximum Nash Welfare allocations can be computed in polynomial time~\cite{nguyen2014mnw}; however, this does not yield exact \EF algorithms, since \MNW and \EF are inequivalent for indivisible items.

\paragraph{Our contribution.}
We present the first subexponential algorithm for exact envy-free allocation in the three-type setting, covering goods, chores, and mixed instances.
The key insight is geometric: each agent's valuation vector
$v_i \in \mathbb{R}^3$ defines a normal to a hyperplane, and a feasible allocation can be represented as a point in a convex polyhedron in $\mathbb{R}^3$. The \EF conditions
\[
   \forall\, i,j \in N: \qquad \langle v_i,\, x_i - x_j \rangle \;\geq\; 0
\]
translate into a system of linear half-space constraints, so the set of all \EF allocations forms a bounded convex polytope. We exploit this structure through two novel procedures:
\texttt{EnumerateSeparator}: finds a small cycle separator of size at most $\sqrt{8n}$ on the planar graph induced by the EF polytope; its existence is guaranteed by Miller's classical planar cycle-separator theorem~\cite{miller1986finding}.
\texttt{Divide}: uses the spherical image of the separator to
partition the remaining agents into two balanced groups $L$ and $R$, each of size at most $\tfrac{11}{12}n$, such that any \EF solutions for $L$ and $R$ can be merged into an \EF solution for the whole.

Combining these in a divide-and-conquer scheme gives us the
\texttt{ConstructEF} algorithm with time complexity $(n \cdot m)^{\Oh(\sqrt{n})}.$
We also consider an extended setting with \emph{fixed agents}---agents whose allocations are predetermined as part of the input---and show that the algorithm extends naturally to this case.

\paragraph{Related work.}
Envy-free allocation of indivisible goods has been studied extensively in the computational social choice community. Exact \EF algorithms are known for two agents~\cite{bliem2016complexity} and for special valuation structures~\cite{aziz2015fair}. For the general multi-agent case, the problem is \classNP-hard when an \EF allocation may not exist~\cite{bouveret2008efficiency}.
The multi-type (also called \emph{multi-dimensional} or \emph{categorised}) setting was studied in~\cite{mackin2016allocating,10.5555/3463952.3463988} but without
subexponential guarantees. \EF for chores and mixed manna has attracted significant recent attention~\cite{bhaskar2021chores,aziz2021mixedmanna,bogomolnaia2017dividing}; to the best of our knowledge, no subexponential exact algorithm was previously known for any non-trivial multi-agent, multi-type variant of this problem, regardless of the sign of valuations.

\paragraph{Organization of the paper.}
Section~\ref{sec:prelim} introduces the formal model and notation.
\Cref{sec:algorithm} presents our main algorithmic result (\Cref{thm:algorithm}) in the form of a recursive subexponential algorithm, explains the idea behind it, proves its correctness and running time bound.
Subsequent sections (\Cref{sec:enumerate}, \Cref{sec:divide} and \Cref{sec:merge}) are devoted to the proof of the main structural result (\Cref{thm:main}) that our algorithm relies on.
\section{Preliminaries and Notation}
\label{sec:prelim}

We consider a fair division problem with a finite set of agents and multiple types of indivisible goods.

\paragraph{Agents and goods.}
Let $N = \{1, \dots, n\}=[n]$ denote the set of all agents,  
$M$ denote the set of all items and $m = |M|$ is the total number of items. 
Let $T = \{1,2,3\}$ denote the set of item types.  
For each type $t \in T$, let $m_t \in \mathbb{Z}_{\ge 0}$ be the total number of available items of type $t$: $m_1+m_2+m_3 = m$.
In our algorithms, we will deal with subsets of goods, characterized by three non-negative integers $k_i\le m_i$, for each $i\in T$.
We denote $\mathbf{k} = (k_1, k_2, k_3)$.

\paragraph{Valuations.}
Each agent $i \in N$ has additive linear valuations over item types, represented by a vector
\[
{v}_i = (v_i(1), v_i(2), v_i(3)) \in \mathbb{R}_{\ge 0}^3.
\]
The only restriction we put on a vector $v_i$ for each $i$ is that  $v_i$ is non-degenerate, that is, $v_i\neq(0,0,0)$.
Given an allocation $x_i \in \mathbb{Z}_{\ge 0}^3$, the utility of agent $i$ is defined as
\[
u_i(x_i) = \sum_{t \in T} v_i(t)\cdot x_{i,t} = \langle v_i, x_i \rangle.
\]

\paragraph{Allocations.}
Let $A \subseteq N$ be set of agents. An allocation for $A$ is a collection $X = (x_i)_{i \in A}$, where each
\[
x_i = (x_{i,1}, x_{i,2}, x_{i,3}) \in \mathbb{Z}_{\ge 0}^3
\]
specifies the number of items of each type assigned to agent $i$.
An allocation is \emph{feasible} for $\mathbf{k}$ if it satisfies the resource constraints,
$\sum_{i \in A} x_{i,t} = k_t$ for each $t \in T$.

\paragraph{Envy-freeness.}
An allocation $X = (x_i)_{i \in A}$ is called \emph{envy-free} (\EF) if for any pair of agents $i,j \in A$,
$u_i(x_i) \ge u_i(x_j)$.

\paragraph{Fixed agents.}
In the extended setting, we are also given a subset of agents $F \subseteq A$ whose allocations are fixed in advance.  
For each $i \in F$, the vector $x_i \in \mathbb{Z}_{\ge 0}^3$ is specified as a part of the input.

We denote the remaining agents by $A_{\text{free}} = A \setminus F$.  
The goal is to compute allocations $x_i$ for all $i \in A_{\text{free}}$ such that the combined allocation over all agents $A$ is feasible and envy-free.

We say that two partial allocations \emph{agree}, if for each agent that is given a bundle in both allocations, this bundle is the same in both allocations for this agent.

\paragraph{Geometric interpretation.}
We view each allocation vector $x_i \in \mathbb{R}^3$ as a point in $\mathbb{R}^3$, and each valuation vector $v_i$ as a normal vector.  
The envy-freeness condition between agents $i$ and $j$ with bundles $x_i$ and $x_j$ respectively can be written as
\[
\langle v_i, x_i - x_j \rangle \ge 0.
\]
That is, $x_i$ and $v_i$ define a half-space for envy-free bundles $x_j$ in $\mathbb{R}^3$.  
Thus, the set of envy-free allocations can be interpreted as a region defined by linear constraints in $\mathbb{R}^3$. This is also the reason why we forbid degenerate valuations: such valuations do not define half-spaces.  

\begin{figure}
\begin{tabular}{lr}
\hspace*{-50px}
    \tdplotsetmaincoords{70}{115}
\definecolor{faceCut0}{rgb}{0.224,0.224,0.575}
\definecolor{faceCut1}{rgb}{0.274,0.274,0.705}
\definecolor{faceCut2}{rgb}{0.299,0.299,0.770}
\definecolor{faceCut3}{rgb}{0.122,0.122,0.315}
\definecolor{faceCut4}{rgb}{0.325,0.325,0.835}
\definecolor{faceCut5}{rgb}{0.173,0.173,0.445}
\definecolor{faceCut6}{rgb}{0.249,0.249,0.640}
\definecolor{faceCut7}{rgb}{0.148,0.148,0.380}
\definecolor{faceCut8}{rgb}{0.350,0.350,0.900}
\definecolor{faceCut9}{rgb}{0.198,0.198,0.510}
\definecolor{faceCoord0}{rgb}{0.334,0.334,0.334}
\definecolor{faceCoord1}{rgb}{0.193,0.193,0.193}
\definecolor{faceCoord2}{rgb}{0.193,0.193,0.193}

\begin{tikzpicture}[
    tdplot_main_coords, scale=5, >=stealth,
    vertex/.style={circle, fill=black, inner sep=0.5pt},
    axis/.style={->, thick, black},
    edge/.style={draw=blue!40!black!70, line width=0.65pt},
]

\draw[axis] (0,0,0) -- (1.15,0,0) node[above] {$x$};
\draw[axis] (0,0,0) -- (0,1.15,0) node[above] {$y$};
\draw[axis] (0,0,0) -- (0,0,1.15) node[above] {$z$};

\coordinate (V1) at (0.3250,0.6604,0.5388);
\coordinate (V2) at (0.5893,0.6278,0.3071);
\coordinate (V3) at (0.6278,0.3071,0.5893);
\coordinate (V4) at (0.6604,0.5388,0.3250);
\coordinate (V5) at (0.5388,0.3250,0.6604);
\coordinate (V6) at (0.3071,0.5893,0.6278);
\coordinate (V7) at (0.2535,0.0201,0.9344);
\coordinate (V8) at (0.1183,0.5173,0.7444);
\coordinate (V9) at (0.2473,-0.0000,0.9420);
\coordinate (V10) at (-0.0000,0.5392,0.7360);
\coordinate (V11) at (-0.0000,-0.0000,0.9420);
\coordinate (V12) at (0.0201,0.9344,0.2535);
\coordinate (V13) at (0.5173,0.7444,0.1183);
\coordinate (V14) at (-0.0000,0.9420,0.2473);
\coordinate (V15) at (0.5392,0.7360,-0.0000);
\coordinate (V16) at (-0.0000,0.9420,-0.0000);
\coordinate (V17) at (0.7444,0.1183,0.5173);
\coordinate (V18) at (0.9344,0.2535,0.0201);
\coordinate (V19) at (0.7360,-0.0000,0.5392);
\coordinate (V20) at (0.9420,0.2473,-0.0000);
\coordinate (V21) at (0.9420,-0.0000,-0.0000);
\coordinate (V22) at (-0.0000,-0.0000,-0.0000);

\filldraw[fill=faceCoord0, draw=gray!40, opacity=0.35, line width=0.4pt]
    (V16) -- (V14) -- (V10) -- (V11) -- (V22) -- cycle;
\filldraw[fill=faceCoord1, draw=gray!40, opacity=0.35, line width=0.4pt]
    (V22) -- (V11) -- (V9) -- (V19) -- (V21) -- cycle;
\filldraw[fill=faceCoord2, draw=gray!40, opacity=0.35, line width=0.4pt]
    (V21) -- (V20) -- (V15) -- (V16) -- (V22) -- cycle;

\filldraw[fill=faceCut3, draw=blue!40!black!70, opacity=0.55, line width=0.5pt]
    (V20) -- (V18) -- (V17) -- (V19) -- (V21) -- cycle;
\filldraw[fill=faceCut7, draw=blue!40!black!70, opacity=0.55, line width=0.5pt]
    (V19) -- (V17) -- (V3) -- (V5) -- (V7) -- (V9) -- cycle;
\filldraw[fill=faceCut5, draw=blue!40!black!70, opacity=0.55, line width=0.5pt]
    (V4) -- (V3) -- (V17) -- (V18) -- cycle;
\filldraw[fill=faceCut9, draw=blue!40!black!70, opacity=0.55, line width=0.5pt]
    (V15) -- (V13) -- (V2) -- (V4) -- (V18) -- (V20) -- cycle;
\filldraw[fill=faceCut0, draw=blue!40!black!70, opacity=0.55, line width=0.5pt]
    (V4) -- (V2) -- (V1) -- (V6) -- (V5) -- (V3) -- cycle;
\filldraw[fill=faceCut6, draw=blue!40!black!70, opacity=0.55, line width=0.5pt]
    (V7) -- (V5) -- (V6) -- (V8) -- cycle;
\filldraw[fill=faceCut1, draw=blue!40!black!70, opacity=0.55, line width=0.5pt]
    (V11) -- (V9) -- (V7) -- (V8) -- (V10) -- cycle;
\filldraw[fill=faceCut2, draw=blue!40!black!70, opacity=0.55, line width=0.5pt]
    (V16) -- (V14) -- (V12) -- (V13) -- (V15) -- cycle;
\filldraw[fill=faceCut4, draw=blue!40!black!70, opacity=0.55, line width=0.5pt]
    (V12) -- (V1) -- (V2) -- (V13) -- cycle;
\filldraw[fill=faceCut8, draw=blue!40!black!70, opacity=0.55, line width=0.5pt]
    (V10) -- (V8) -- (V6) -- (V1) -- (V12) -- (V14) -- cycle;

\node[vertex, fill=black] at (V1) {};
\node[vertex, fill=black] at (V2) {};
\node[vertex, fill=black] at (V3) {};
\node[vertex, fill=black] at (V4) {};
\node[vertex, fill=black] at (V5) {};
\node[vertex, fill=black] at (V6) {};
\node[vertex, fill=black] at (V7) {};
\node[vertex, fill=black] at (V8) {};
\node[vertex, fill=black] at (V9) {};
\node[vertex, fill=black] at (V10) {};
\node[vertex, fill=black] at (V11) {};
\node[vertex, fill=black] at (V12) {};
\node[vertex, fill=black] at (V13) {};
\node[vertex, fill=black] at (V14) {};
\node[vertex, fill=black] at (V15) {};
\node[vertex, fill=black] at (V16) {};
\node[vertex, fill=black] at (V17) {};
\node[vertex, fill=black] at (V18) {};
\node[vertex, fill=black] at (V19) {};
\node[vertex, fill=black] at (V20) {};
\node[vertex, fill=black] at (V21) {};
\node[vertex, fill=gray!50] at (V22) {};

\draw[edge] (V1)--(V2);
\draw[edge] (V1)--(V6);
\draw[edge] (V1)--(V12);
\draw[edge] (V2)--(V4);
\draw[edge] (V2)--(V13);
\draw[edge] (V3)--(V4);
\draw[edge] (V3)--(V5);
\draw[edge] (V3)--(V17);
\draw[edge] (V4)--(V18);
\draw[edge] (V5)--(V6);
\draw[edge] (V5)--(V7);
\draw[edge] (V6)--(V8);
\draw[edge] (V7)--(V8);
\draw[edge] (V7)--(V9);
\draw[edge] (V8)--(V10);
\draw[edge] (V9)--(V11);
\draw[edge] (V9)--(V19);
\draw[edge] (V10)--(V11);
\draw[edge] (V10)--(V14);
\draw[edge] (V11)--(V22);
\draw[edge] (V12)--(V13);
\draw[edge] (V12)--(V14);
\draw[edge] (V13)--(V15);
\draw[edge] (V14)--(V16);
\draw[edge] (V15)--(V16);
\draw[edge] (V15)--(V20);
\draw[edge] (V16)--(V22);
\draw[edge] (V17)--(V18);
\draw[edge] (V17)--(V19);
\draw[edge] (V18)--(V20);
\draw[edge] (V19)--(V21);
\draw[edge] (V20)--(V21);
\draw[edge] (V21)--(V22);


\end{tikzpicture} & \begin{tikzpicture}[scale=11, line cap=round, line join=round]
\definecolor{myblue}{rgb}{0.1, 0.3, 0.8}
\definecolor{myred}{rgb}{0.8, 0.1, 0.1}
\definecolor{mygreen}{rgb}{0.1, 0.8, 0.1}

\draw[myblue, line width=0.6pt] (-0.23559, -0.30160) -- (-0.25517, 0.08832);
\draw[myblue, line width=0.6pt] (-0.25517, 0.08832) -- (0.00015, 0.00013);
\draw[myblue, line width=0.6pt] (-0.14467, 0.35372) -- (0.20422, 0.17716);
\draw[myblue, line width=0.6pt] (-0.15348, -0.11368) -- (0.00015, 0.00013);
\draw[myblue, line width=0.6pt] (0.37938, -0.05318) -- (0.17545, -0.07607);
\draw[myblue, line width=0.6pt] (0.05142, -0.26508) -- (0.17545, -0.07607);
\draw[myblue, line width=0.6pt] (-0.25517, 0.08832) -- (-0.15348, -0.11368);
\draw[myblue, line width=0.6pt] (0.05142, -0.26508) -- (0.00015, 0.00013);
\draw[myblue, line width=0.6pt] (-0.14467, 0.35372) -- (-0.02160, 0.19010);
\draw[myblue, line width=0.6pt] (0.37938, -0.05318) -- (0.20422, 0.17716);
\draw[myblue, line width=0.6pt] (-0.14467, 0.35372) -- (-0.25517, 0.08832);
\draw[myblue, line width=0.6pt] (0.20422, 0.17716) -- (-0.02160, 0.19010);
\draw[myblue, line width=0.6pt] (0.37938, -0.05318) -- (0.05142, -0.26508);
\draw[myblue, line width=0.6pt] (-0.02160, 0.19010) -- (0.00015, 0.00013);
\draw[myblue, line width=0.6pt] (-0.25517, 0.08832) -- (-0.02160, 0.19010);
\draw[myblue, line width=0.6pt] (-0.23559, -0.30160) -- (0.05142, -0.26508);
\draw[myblue, line width=0.6pt] (-0.23559, -0.30160) -- (-0.15348, -0.11368);
\draw[myblue, line width=0.6pt] (0.05142, -0.26508) -- (-0.15348, -0.11368);
\draw[myblue, line width=0.6pt] (0.20422, 0.17716) -- (0.00015, 0.00013);
\draw[myblue, line width=0.6pt] (0.20422, 0.17716) -- (0.17545, -0.07607);
\draw[myblue, line width=0.6pt] (0.17545, -0.07607) -- (0.00015, 0.00013);

\tikzset{every node/.style={}}
\fill[myred] (-0.14467, 0.35372) circle (0.2pt);
\fill[myred] (0.37938, -0.05318) circle (0.2pt);
\fill[myred] (0.20422, 0.17716) circle (0.2pt);
\fill[myred] (-0.23559, -0.30160) circle (0.2pt);
\fill[myred] (-0.25517, 0.08832) circle (0.2pt);
\fill[myred] (0.05142, -0.26508) circle (0.2pt);
\fill[myred] (0.17545, -0.07607) circle (0.2pt);
\fill[myred] (-0.02160, 0.19010) circle (0.2pt);
\fill[myred] (-0.15348, -0.11368) circle (0.2pt);
\fill[myred] (0.00015, 0.00013) circle (0.2pt);
\end{tikzpicture} \\
\end{tabular}
\caption{An example of a convex polyhedron (left) given by the intersection of the halfspaces given by the agents' valuations (hyperplane normal vector) and their bundles (point on the hyperplane), and the corresponding planar graph (right), where agents become vertices and share an edge iff their hyperplanes share an edge.
Intuitively, this graph is the ``envy'' graph, suggesting to check envy only between agents that are neighbors in the graph.}\label{fig:3d}
\end{figure}
\section{Algorithm}\label{sec:algorithm}
In this section, we describe the recursive algorithm $\ConstructEF$ for finding an envy-free allocation, assuming the existence of procedures $\EnumerateSeparator$ and $\Divide$.
As announced in the introduction, these two procedures allow to fix a certain allocation and split the agents into two groups that are not very much different in size.
We introduce their properties in the following technical result.

\begin{theorem}
\label{thm:main}

There exist efficient algorithms $\Divide$ and $\EnumerateSeparator$ and universal constants
$C_1, C_2$ with $C_2 < 1$ such that the following holds.

Let $A$ be a set of agents, let $\mathbf{k} = (k_1, k_2, k_3)$ be the item counts,
and let $X$ be an arbitrary envy-free allocation for $A$.
Then during its enumeration, $\EnumerateSeparator(A,\,\mathbf{k})$ yields a set of agents $S$ and a partial allocation 
$X_S$ of items to agents in $S$ such that $\Divide(A,\, X_S)$ returns a partition
$A = L \cup R$ satisfying:
\begin{enumerate}
    \item $X$ agrees with $X_S$, that is, $X|_S=X_S$;
    \item $|S| \leq C_1 \cdot \sqrt{|A|}$ ;\label{prop:size}
    \item  $L \cap R = S$ \quad and \quad $|L|,\;|R| \leq C_2\cdot|A|$; \label{prop:balance}
    \item for \emph{any} pair of allocations $X_L$ and $X_R$ that are envy-free for $L$ and $R$ respectively and both agree with $X_S$,
          the combined allocation $X_L \cup X_R$ is envy-free for all of $A$. \label{prop:merge}
\end{enumerate}

\end{theorem}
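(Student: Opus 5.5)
Fix an envy-free allocation $X=(x_i)_{i\in A}$ and consider the polyhedron
\[
P \;=\; \bigcap_{i\in A}\{\,y\in\mathbb{R}^3 : \langle v_i, y\rangle \le \langle v_i, x_i\rangle\,\}.
\]
Envy-freeness says exactly that every bundle $x_j$ lies in every halfspace, so $x_j\in P$ for all $j$. Moreover, $x_i$ lies on the bounding plane $H_i=\{\langle v_i,y\rangle=\langle v_i,x_i\rangle\}$. Thus each agent $i$ corresponds to a (possibly degenerate) face $P\cap H_i$ of $P$ that contains $x_i$.

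\textbf{Step 1: the agent graph and the separator.} Build the graph $G$ on $A$ in which $i\sim j$ whenever the faces of $i$ and $j$ share an edge of $P$, as in Figure~\ref{fig:3d}. To guarantee $3$-connectivity and a proper triangulation for Miller's theorem, I would first perturb, or add the coordinate-bounding planes and an auxiliary vertex for the ``face at infinity''. Agents with identical planes are merged, and agents whose plane only touches $P$ in a vertex or an edge are attached to an incident face. By the Steinitz correspondence, the face-adjacency graph of a $3$-polytope is planar; it is the dual graph, i.e.\ the spherical image (Gauss map) of $P$.

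Triangulate $G$ and apply Miller's theorem to obtain a simple cycle $S$ with $|S|\le\sqrt{8|A|}$ whose interior and exterior each contain at most $\tfrac23|A|$ vertices. Slack for merged or degenerate agents will be absorbed into the constants, giving $C_2=\tfrac{11}{12}$.

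\textbf{Step 2: enumeration.} $\EnumerateSeparator(A,\mathbf k)$ iterates over all subsets $S$ of size at most $C_1\sqrt{|A|}$ and over all bundles $x_s\in\prod_t[0,k_t]$ for $s\in S$. This gives $(nm)^{\Oh(\sqrt n)}$ candidates, and among them is the true pair $(S,X|_S)$, which establishes properties 1 and~2.

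\textbf{Step 3: the division.} $\Divide(A,X_S)$ must recover the two sides using only $X_S$ and the valuations. The key observation is that the normals $v_s/\|v_s\|$ for $s\in S$ form a closed polygonal curve on the unit sphere (the spherical image of the separator cycle). This curve splits the sphere into two regions, and I assign every agent $i\notin S$ to $L$ or $R$ according to which region contains $v_i/\|v_i\|$. This assignment needs no knowledge of $x_i$, and by the Gauss-map correspondence it agrees with the interior/exterior split of the cycle in $G$. Property~3 follows.

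The main obstacle is Step~3. The spherical polygon may be non-simple, and ties (normals lying on the curve) need to be handled. I expect to resolve ties by putting such agents into either side and losing only a constant factor in the balance.

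\textbf{Step 4: merging (property 4).} Take envy-free $X_L$ and $X_R$ that agree with $X_S$. Let $P_S$ be the cone-like region bounded by the planes of $S$. Its boundary faces along the separator cycle close it off, so that the part of the polyhedron of $L$ on the ``$L$ side'' is determined by $S$ alone. Concretely, I would show that for $\ell\in L\setminus S$ and $r\in R\setminus S$ the bundle $x_r$ lies in the region $Q_R=\{y:\langle v_s,y\rangle\le\langle v_s,x_s\rangle\ \forall s\in S\}$ intersected with directions of $R$. In that region the linear function $\langle v_\ell,\cdot\rangle$ is maximized on the $S$-boundary, because $v_\ell$ lies in the cone spanned by the separator normals on the opposite side. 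Writing $v_\ell$ as a nonnegative combination of adjacent $v_s$ then gives
\[
\langle v_\ell,x_r\rangle\le\max_{s}\langle v_\ell,x_s\rangle\le\langle v_\ell,x_\ell\rangle,
\]
using envy-freeness inside $L$. The case of $r$ envying $\ell$ is symmetric, and pairs involving $S$ are covered within each side.

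Making this cone-decomposition argument rigorous is the second delicate point, and I would carry it out face by face along the spherical triangulation.
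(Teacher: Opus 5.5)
\textbf{Verdict.} The proposal has a genuine gap: the key inequality in Step~4 is false, and the balance claim in Step~3 does not survive merged or degenerate agents.

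\textbf{Different route.} You apply Miller's theorem to the face-adjacency (dual) graph. So $S$ is a cycle of agents, and $\Gamma$ runs through their normals along arcs of $s(P)$. The paper instead separates the triangulated vertex graph of $P$, records each cycle vertex as the intersection of four separator planes (plus one extra plane), and lets $\Gamma$ pass through the midpoints of the edge images $s(e)$. This difference matters in Step~4.

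\textbf{Step~4 fails.} The inequality $\langle v_\ell,x_r\rangle\le\max_s\langle v_\ell,x_s\rangle$ does not hold. Suppose all separator normals have zero second coordinate. Then $\Gamma$ is that great circle, $L$ receives the normals with positive second coordinate and $R$ those with negative, and no plane of $S$ cuts a line parallel to the second axis. Let $a,b$ be consecutive on the cycle with $v_a=x_a=(0,0,1)$ and $v_b=x_b=(1,0,0)$, so their planes meet in the line $\{(1,t,1)\}$. Take $v_\ell=(10,1,10)$, $v_r=(10,-1,10)$, $x_\ell=(1,1,1)$ and $x_r=(1,0,1)$. This part of $X$ is envy-free, and the faces of $a,b$ share the edge $0\le t\le 1$. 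Now move $r$ to $(1,10,1)$:
\begin{itemize}
\item $a$ and $b$ value it at $1$, which is their own value;
\item $\langle v_r,x_r\rangle=10=\langle v_r,x_a\rangle=\langle v_r,x_b\rangle$, so $a$, $b$ and $r$ still do not envy each other;
\item yet $\langle v_\ell,x_r\rangle=30>21=\langle v_\ell,x_\ell\rangle$, so $\ell$ envies $r$.
\end{itemize}
Here $v_\ell$ is not even in the cone of the separator normals.

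The root cause is that $X_S$ constrains $x_r$ only through the half-spaces of $S$. The extreme values of $\langle v_\ell,\cdot\rangle$ over the resulting region sit on intersection lines of separator planes, not at separator bundles. Even when $v_\ell=\sum_s\lambda_s v_s$ with $\lambda_s\ge 0$, you only get $\langle v_\ell,x_r\rangle\le\sum_s\lambda_s\langle v_s,x_s\rangle$. That bound is not comparable with $\langle v_\ell,x_\ell\rangle$.

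\textbf{What the paper does here.} The paper pins the cycle vertices, which blocks exactly this sliding along a common line of two separator planes. Pinning alone is not enough, though. Any correct merge step must show that every $L$-half-space contains the separator curve itself, not merely the bundles of $X_S$. For instance, an $L$-agent whose normal lies inside the image of a cycle vertex $w$ can move its plane past $w$ while an $R$-agent sits at $w$. The paper's merge paragraph asserts that the separator faces lie on the correct side of $H_L$, but does not derive it, so you cannot simply borrow that step.

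\textbf{Step~3 balance is unsupported.} Miller's $\tfrac23$ bound counts vertices of $G$. Merged agents (identical planes) and degenerate ones (normal inside some $s(w)$ or on some arc $s(e)$) are not vertices of $G$, and they can make up almost all of $A$. With identical valuations, every normal is the same point of the sphere, so any split by normal direction puts all agents in one region or on $\Gamma$. This is not a constant-factor loss. You would need a weighted separator and an explicit rule for agents whose normal lies on $\Gamma$. That rule must also be justified in the merge step, not only for balance. The paper's Euler-formula count does not cover this case either.

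\textbf{Minor point.} \Divide needs the cyclic order of $S$, so \EnumerateSeparator should enumerate sequences rather than subsets.
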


We will prove \Cref{thm:main} later.
Before that, we use it to prove the main result of this paper and present \texttt{ConstructEF}.

\subsection{Description of \ConstructEF}

\paragraph{High-level idea.}

The algorithm follows a divide-and-conquer paradigm. The key idea is to exploit the geometric structure of envy-free allocations. In particular, we assume that any feasible envy-free allocation induces a geometric object (a convex polyhedron).
Its planar graph admits a small separator of size $\Oh(\sqrt{n})$ (see \Cref{fig:3d}).

The algorithm proceeds as follows:
\begin{itemize}
    \item It tries to guess a small set of agents together with their allocation — $X_S$.
    \item  ``Removing'' $X_S$ separates the remaining agents into the \emph{left} part and the \emph{right} part.
    $L$ and $R$ are formed as sets of the agents of the corresponding parts, together with the separator agent set $S$ (so $L\cap R=S$ and $L\cup R=A$).
    \item The algorithm tries every possible way to split the remaining items between the parts.
    \item For each such guess, the algorithm recursively constructs envy-free allocations on $L$ and $R$, treating $X_S$ as a set of agents with fixed allocations.
    This way, the two independent solutions agree on $S$.
    \item Finally, it combines the solutions if both recursive calls succeed.
\end{itemize}

Since the separator is small, the number of possibilities for  $X_S$ is subexponential.
Since the separator is balanced, that is, splits the agents in not very small parts, the depth of the recursion is logarithmic in the number of agents.
This leads to the overall subexponential running time.

\paragraph{Detailed description.}
\texttt{ConstructEF} is present in the form of pseudocode in \Cref{alg:myalg}.
It exactly implements the high-level idea above.
The only technical addition is the recursion base: when $A$ is small, the algorithm finds a solution by brute force (the \texttt{ExhaustiveSearch} procedure).
Note that the algorithm keeps track of the fixed allocations $X_F$, and these allocations are propagated in recursion and distributed between the two recursion parts together with the agents.
We give description of this and another newly-introduced procedures below.

\begin{algorithm}[ht]
\small
\caption{\ConstructEF$(A,X_F,\mathbf{k})$: a recursive subexponential algorithm that constructs an \EF allocation of a subset of items defined by $\mathbf{k}$ to a set of agents $A$, in a way that extends a partial allocation $X_F$.}
\label{alg:myalg}
\begin{algorithmic}[1]

\Require $A \subseteq [n]$ --- set of agents;
\Require $F \subseteq A$ --- set of agents whose allocations are fixed;
\Require $X_F = (i, (x_{i,1}, x_{i,2}, x_{i,3}))_{i \in F}$ --- fixed allocation for agents in $F$;
\Require $\mathbf{k} = (k_1, k_2, k_3)$ --- total number of items of each type to allocate for $A$.

\Statex

\If{$|A| \leq 1200$}
    \State \Return $\ExhaustiveSearch(A, X_F, \mathbf{k})$;
\EndIf
\Loop
    \State $(S, X_S) \gets \EnumerateSeparator(A, \mathbf{k})$;
    \If{$X_S = \None$}
        \State \Return \None;
    \EndIf
    \If{$X_S$ does not agree with $X_F$}
        \State \textbf{continue};
    \EndIf
    \State $\mathbf{k}^S = (k^S_1, k^S_2, k^S_3) \gets \sum_{X_S}x_i $;
    \ForEach{$\mathbf{k}^L \in \mathbb{Z}_{\ge 0}^3$ s. t. $\mathbf{k}^S\le \mathbf{k}^L \leq \mathbf{k}$}
        \State $(L,\ R) \gets \Divide(A, X_S)$;
        \State $\mathbf{k}^R \gets \mathbf{k} + \mathbf{k}^S - \mathbf{k}^L$;
        \State $\EF_L \gets \ConstructEF\left(L, X_{(F\cap L)} \cup X_{S},  \mathbf{k}^L\right)$;
        \State $\EF_R \gets \ConstructEF\left(R, X_{(F\cap R)} \cup X_{S}, \mathbf{k}^R\right)$;
        \If{$\EF_L \neq \None$ \textbf{and} $\EF_R \neq \None$ \textbf{and} $\texttt{IsEF}(\EF_L \cup \EF_R)$}
            \State \Return $\EF_L \cup \EF_R$;
        \EndIf
    \EndForEach
\EndLoop

\end{algorithmic}
\end{algorithm}

\paragraph{Exhaustive search.}
For completeness, we present the $\ExhaustiveSearch$ procedure in \Cref{alg:brute} in the form of pseudocode.
The idea is to enumerate all feasible allocations of items to agents in $A$, respecting: (i) the fixed allocations $X_F$, (ii) the total item counts $\mathbf{k}$, and check the envy-freeness condition.

\begin{algorithm}[ht]
\small
\caption{\ExhaustiveSearch$(A,X_F,\mathbf{k})$: an algorithm that constructs an envy-free allocation of an item subset given by $\mathbf{k}$ to agents in $A$, in a way that extends $X_F$.}\label{alg:brute}
\begin{algorithmic}[1]
\Require $A \subseteq [n]$ --- set of agents;
\Require $F \subseteq A $ --- set of agents whose allocations are fixed;
\Require $X_F = (i, (x_{i,1}, x_{i,2}, x_{i,3}))_{i \in F}$ --- fixed allocation for agents in $F$;
\Require $\mathbf{k} = (k_1, k_2, k_3)$ --- number of items of each type;
\Statex

\For{each assignment $X_{A\setminus F} = (i, (x_i))_{i \in A \setminus F}$ such that
    \Statex $\forall i \in{A\setminus F}: x_i \in \mathbb{Z}_{\ge 0}^3$ and $\sum_{i \in A} x_{i} = \mathbf{k}$ }
\State $X \gets X_{A\setminus F} \cup X_F$;
\If{\texttt{IsEF}($X$)}
    \State \Return $X$;
\EndIf
\EndFor

\State \Return $\None$;

\end{algorithmic}
\end{algorithm}

The procedure checks all feasible allocations consistent with the resource constraints and fixed agents. Since the search space is finite and all allocations are explicitly tested, it returns an envy-free allocation if one exists, and $\None$ otherwise.
The number of allocations is bounded by $(k_1 \cdot k_2 \cdot k_3)^{|A \setminus F|}$, this determines the running time of the procedure.
It is used by our main algorithm only in case $|A|=\Oh(1)$, so any run of \texttt{ExhaustiveSearch} runs in time polynomial in $m$.
\paragraph{Envy-freeness check.}
\texttt{IsEF} is a polynomial-time procedure that evaluates whether a given (partial) allocation is envy-free and the constraints on the number of items are satisfied for each type.

\subsection{Correctness and running time analysis}
In what remains of this section, we prove that \texttt{ConstructEF} is correct and runs in time subexponential in $n$.
We start with proving the correctness.

\begin{lemma}
Algorithm $\ConstructEF$ returns an EF allocation of items with quantities $\mathbf{k}$ for agents in the set $A$ that extends $X_F$. If no such allocation exists, it returns $\None$.
\end{lemma}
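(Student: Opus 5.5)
We prove the lemma by strong induction on $|A|$. The statement has two directions. \emph{Soundness}: any non-$\None$ output is an \EF allocation of $\mathbf{k}$ extending $X_F$. \emph{Completeness}: if such an allocation exists, the output is not $\None$. The base case $|A|\le 1200$ is handled by $\ExhaustiveSearch$. That procedure enumerates every assignment to $A\setminus F$ summing (together with $X_F$) to $\mathbf{k}$ and tests each one with \texttt{IsEF}, so both directions hold there trivially.

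For soundness in the recursive case, note that the algorithm returns $\EF_L\cup\EF_R$ only after \texttt{IsEF} has accepted it. Hence the output is envy-free and uses exactly $\mathbf{k}$ items. It remains to check that it extends $X_F$. By the induction hypothesis (applicable since $|L|,|R|\le C_2|A|<|A|$ by property~\ref{prop:balance} of \Cref{thm:main}), $\EF_L$ extends $X_{F\cap L}\cup X_S$ and $\EF_R$ extends $X_{F\cap R}\cup X_S$. Because $L\cup R=A$, every agent of $F$ keeps its fixed bundle. Because the two sides agree on $S=L\cap R$, the union is a well-defined allocation.

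For completeness, suppose $X$ is an \EF allocation of $\mathbf{k}$ for $A$ extending $X_F$. By \Cref{thm:main}, $\EnumerateSeparator(A,\mathbf{k})$ eventually yields a pair $(S,X_S)$ with $X|_S=X_S$ whose partition $(L,R)=\Divide(A,X_S)$ satisfies properties 1--4. I would verify the following steps for that iteration.
\begin{enumerate}
\item The enumeration does not return $\None$ before reaching this pair. This should follow from the fact that the enumeration terminates only after exhausting its candidates.
\item $X_S$ agrees with $X_F$, since both are restrictions of $X$. So the iteration is not skipped, unless the algorithm returned a valid answer even earlier, which soundness already covers.
\item Put $\mathbf{k}^L=\sum_{i\in L}x_i$. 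Then $\mathbf{k}^S\le\mathbf{k}^L\le\mathbf{k}$, so this vector is among those enumerated. Moreover $\mathbf{k}^R=\mathbf{k}+\mathbf{k}^S-\mathbf{k}^L=\sum_{i\in R}x_i$, because $S$ is counted in both parts.
\item $X|_L$ is \EF for $L$, allocates $\mathbf{k}^L$, and extends $X_{F\cap L}\cup X_S$; symmetrically for $X|_R$. By induction, both recursive calls return allocations $\EF_L,\EF_R$ that are envy-free, agree with $X_S$, and have the required item counts. These need not equal $X|_L$ and $X|_R$.
\item Property~\ref{prop:merge} of \Cref{thm:main} then guarantees that $\EF_L\cup\EF_R$ is \EF for all of $A$. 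The totals add to $\mathbf{k}^L+\mathbf{k}^R-\mathbf{k}^S=\mathbf{k}$, so \texttt{IsEF} accepts and the algorithm returns a non-$\None$ answer.
\end{enumerate}

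The main obstacle is fixed agents inside the recursion. Property~\ref{prop:merge} is stated for allocations that are \EF for $L$ and $R$. Envy between an agent of $F\cap L$ and one of $F\cap R$ is not directly controlled, since \Cref{thm:main} concerns $\EnumerateSeparator(A,\mathbf{k})$, which ignores $F$. However, the final \texttt{IsEF} check covers soundness, and for completeness property~\ref{prop:merge} applies verbatim because fixed agents are ordinary agents with predetermined bundles. I would also argue termination: each recursive call is on a strictly smaller set, and the enumeration and the loop over $\mathbf{k}^L$ are finite, so once the enumeration is exhausted the algorithm returns $\None$.
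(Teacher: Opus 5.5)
Your proof follows the same route as the paper's: induction on $|A|$, an exhaustive base case, completeness via the pair guaranteed by \Cref{thm:main} together with the induction hypothesis and property~\ref{prop:merge}, and soundness via the final \texttt{IsEF} check. It is also more careful than the paper about extending $X_F$ and about the bookkeeping $\mathbf{k}^R=\sum_{i\in R}x_i$. One step, however, does not follow from what you cite: the claim that every recursive call is on a strictly smaller set. \Cref{thm:main} only asserts that \emph{some} yielded pair, namely the one associated with a given \EF allocation $X$, has properties 1--4. It says nothing about the other pairs that \EnumerateSeparator yields and that agree with $X_F$. Nothing in \CheckAllocation looks at how the remaining agents will be split. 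The paper's balancedness argument for \Divide uses the fact that the cycle is Miller's separator of the polyhedron of an actual \EF allocation. So for an arbitrary candidate, \Divide may put all of $A\setminus S$ on one side, giving $L=A$. With $\mathbf{k}^L=\mathbf{k}$, the call $\ConstructEF(A,X_F\cup X_S,\mathbf{k})$ can then reach the same pair again and call itself.

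This matters in two places. In the soundness paragraph you apply the induction hypothesis to the calls made for whichever pair the algorithm happened to return on. That is repairable: you can instead induct on the depth of the (finite) recursion tree of a call that actually returns, which needs no size decrease. Completeness cannot be repaired this way. The guaranteed pair may be enumerated after unbalanced ones, and your step~1 only rules out an early $\None$ from the enumerator. It does not rule out the algorithm getting stuck in a recursive call made for an earlier candidate. To close the gap, \ConstructEF should discard every pair for which $\Divide(A,X_S)$ returns $|L|>C_2|A|$ or $|R|>C_2|A|$. The guaranteed pair passes this test, so completeness is unaffected, and your induction on $|A|$ and your termination argument then hold for all calls. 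The paper's own proof is silent on this point, and its running-time recurrence makes the same tacit assumption, so this is a hole you share with it. But your explicit appeal to property~\ref{prop:balance} to fill it misreads the quantifiers of \Cref{thm:main}.
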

\begin{proof}
We prove correctness by induction on $p = |A|$.

\textbf{Base case.}  
If $p\leq 1200$, the algorithm calls $\ExhaustiveSearch$ and gives its output, that is clearly correct.

\textbf{Inductive step.}  
Assume the statement holds for all sets of agents of size strictly smaller than $p$.
Suppose there exists an envy-free allocation $X$ for the current instance $(A, X_F, \mathbf{k})$.

By \Cref{thm:main}, there exists a separator $S \subseteq A$ of size $\Oh(\sqrt{n})$ together with an allocation $X_S$ such that (i) removing $S$ partitions $A \setminus S$ into two sets $L\setminus S$ and $R\setminus S$; (ii) the restriction of $X$ to $L$ and $R$ remains extendable independently; (iii)
combining $X_S$, $X_L=X|_L$, and $X_R=X|_R$ is an envy-free allocation.
By \Cref{thm:main}, during its execution, $\EnumerateSeparator$ eventually outputs $S$ and $X_S$.

Let $\mathbf{k}^L$ and $\mathbf{k}^R$ be the corresponding vector of item counts assigned to $L$ and $R$ in $X$. Then the algorithm will consider this choice of $\mathbf{k}^L$.

By the inductive hypothesis, the recursive call on $(L, X_{F\cap L}\cup X_S, \mathbf{k}^L)$ gives $X'_L$ (note that it can be distinct from $X|_L$ but is still an EF-allocation),
     the same for $(R, X_{F\cap R}\cup X_S, \mathbf{k}^R)$ and $X'_R$.
Note that $X'_L$ and $X'_R$ agree, since they agree with $X_S$ and $L\cap R=S$.
By \Cref{thm:main}, combining $X'_L, X'_R$ and $X_S$ yields an EF-allocation that extends $X_F$.
It passes the run of the \texttt{IsEF} procedure.

If no envy-free allocation exists, the algorithm could not return anything other than \None, since no allocation could pass the check by \texttt{IsEF}.
The proof is complete.
\end{proof}

\paragraph{Running time analysis.}
We now analyze the running time of $\ConstructEF$, using the notion $p = |A|$, $K = (k_1+1) \cdot (k_2+1) \cdot (k_3+1)$.
To proceed, we clarify what does ``efficient'' means in the statement of \Cref{thm:main}.
First, $\Divide(A,X_S)$ runs in polynomial time, i.e.\  it takes $(pK)^{\Oh(1)}$ time.
Second, $\EnumerateSeparator$ runs in time $(pK)^{d\sqrt{p}}$ for some universal constant $d\ge 1$, given by the upper bound to the number of possible agent sets of size $\Oh(\sqrt{p})$ and possible partial allocations.

Our goal is to show that the overall running time is subexponential in the number of agents.

\paragraph{Recurrence for $\ConstructEF$.}
Let $T(p,\mathbf{k})$ denote the worst-case running time of $\ConstructEF$ on an instance with $p$ agents and item quantity vector $\mathbf{k}=(k_1,k_2,k_3)$.
We prove by induction that $T(p,\mathbf{k})\le (pK)^{c\sqrt{p}}$ for some constant $c>d$ to be specified later.
We naturally assume that $T(p',\mathbf{k}')\le T(p,\mathbf{k})$ for any $p'\le p$ and $\mathbf{k}'\le \mathbf{k}$.

If $p \le 1200$, the algorithm calls $\ExhaustiveSearch$.
Since this bound is just a constant, we have $T(p,\mathbf{k})=(pK)^{\Oh(1)}$ in this case.

Let us move on to $p>1200$.
For each candidate separator $X_S$ (at most $(pK)^{d\sqrt{p}}$ choices), the algorithm:
\begin{itemize}
    \item computes $(L,R)\gets \Divide(A, X_S)$ in polynomial time;
    \item iterates over all vectors $\mathbf{k}^L=(k_1^L,k_2^L,k_3^L)$ satisfying, for each $t\in [3]$, $k_t^S\le k_t^L \le k_t$.
\end{itemize}
The number of such vectors is at most $K$.
For each such choice, the algorithm makes two recursive calls, one on $L$ and one on $R$.

By the separator property, both sets $L$ and $R$ contain at most $\alpha p$ agents for some constant $\alpha<1$ ($C_2$ in \Cref{thm:main}). Thus,
$
|L| \le \alpha p
$
and
$
|R| \le \alpha p,
$
and the recursive call on each of them runs in at most $T(\alpha p, \mathbf{k})$ time. Then
\[
T(p,\mathbf{k}) \le
(pK)^{d\sqrt{p}}\cdot K\cdot (T(\alpha p,\mathbf{k})+T(\alpha p,\mathbf{k})).
\]
Equivalently, we may write
\[
T(p,\mathbf{k})
\le
(pK)^{d\sqrt{p}+2}\cdot T(\alpha p,\mathbf{k}).
\]
Using the bound on $T(\alpha p, \mathbf{k})$ by induction,
\[
T(p,\mathbf{k})
\le
(pK)^{d\sqrt{p}+2}\cdot (pK)^{c\sqrt{\alpha p}}=(pK)^{(d+c\sqrt{\alpha})\cdot \sqrt{p}+2}.
\]
It suffices to use that
\[(d+c\sqrt{\alpha})\cdot \sqrt{p}+2\le c\sqrt{p},\]
so choose any $c$ satisfying
\[2\le (c(1-\sqrt{\alpha})-d)\cdot \sqrt{1200}\]
to complete the proof of $T(p,\textbf{k})\le (pK)^{\Oh(\sqrt{p})}$.

\paragraph{Summing up.}
The discussion above leads to the following theorem, which is the main result of the paper.
\begin{theorem}\label{thm:algorithm}
    Given $n$ agents with non-degenerate additive valuations and $m$ items that come in three types, and a partial allocation of some items to some agents, an envy-free allocation extending the given partial allocation (if one exists) can be found in time $(nm)^{\Oh(\sqrt{n})}.$
\end{theorem}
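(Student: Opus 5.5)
The plan is to apply \ConstructEF to the original instance: $A=N$, the given partial allocation as $X_F$, and $\mathbf{k}=(m_1,m_2,m_3)$. Correctness follows from the correctness lemma proved above. By induction on $|A|$, it shows that \ConstructEF returns an \EF allocation that has item counts $\mathbf{k}$ and extends $X_F$, whenever such an allocation exists, and returns \None otherwise. One point needs a short check. That lemma invokes \Cref{thm:main} for an arbitrary \EF allocation $X$ of $A$, and $X$ extends $X_F$. The separator allocation $X_S$ yielded by \EnumerateSeparator equals $X|_S$, so it agrees with $X_F$ and the \textbf{continue} branch does not skip it. In the recursive calls, the fixed agents $F\cap L$ and $F\cap R$, together with $S$, are fixed consistently with $X$. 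The induction hypothesis therefore applies to the instances $(L, X_{F\cap L}\cup X_S, \mathbf{k}^L)$ and $(R, X_{F\cap R}\cup X_S,\mathbf{k}^R)$, where $\mathbf{k}^L,\mathbf{k}^R$ are the counts that $X$ induces on $L$ and $R$.

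For the running time, I use the recurrence already derived: $T(p,\mathbf{k})\le (pK)^{c\sqrt{p}}$, where $K=(k_1+1)(k_2+1)(k_3+1)$ and $c$ is a constant chosen from $d$ and $\alpha=C_2$. At the top level, $p=n$ and $K\le (m+1)^3$, because $k_t=m_t\le m$. Hence $pK\le n(m+1)^3\le (nm)^{\Oh(1)}$ whenever $n,m\ge 2$; the degenerate cases $n=1$ or $m\le 1$ are trivial. This gives $T\le (nm)^{\Oh(\sqrt{n})}$. The \ExhaustiveSearch base case has $|A|\le 1200$, so it costs $K^{1200}=m^{\Oh(1)}$, and the polynomial overhead of \texttt{IsEF} and \Divide is absorbed into the same bound.

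The recurrence deserves a closer look, since I expect it to be the main subtlety. The inequality $T(p,\mathbf{k})\le (pK)^{d\sqrt{p}}\cdot K\cdot 2T(\alpha p,\mathbf{k})$ uses monotonicity of $T$ in $\mathbf{k}$. This is valid because $\mathbf{k}^L\le\mathbf{k}$, and $\mathbf{k}^R=\mathbf{k}+\mathbf{k}^S-\mathbf{k}^L\le\mathbf{k}$ because $\mathbf{k}^L\ge\mathbf{k}^S$. The closing step requires $c(1-\sqrt{\alpha})-d\ge 2/\sqrt{p}$. Since $p>1200$, this reduces to a fixed constant condition on $c$, and such a $c$ exists because $\alpha<1$. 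Finally, the outer \textbf{loop} terminates: each iteration consumes one item of the enumeration by \EnumerateSeparator, whose length is at most $(pK)^{d\sqrt{p}}$, and the loop returns \None once the enumeration is exhausted.
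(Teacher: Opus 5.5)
Your proposal is correct and follows the paper's route: call \ConstructEF on $(N, X_F, (m_1,m_2,m_3))$, take correctness from the inductive lemma built on \Cref{thm:main}, and specialise the recurrence bound $T(p,\mathbf{k})\le (pK)^{c\sqrt{p}}$ to $p=n$ and $K\le (m+1)^3$. The extra checks you spell out are details the paper leaves implicit, not a different argument: that $X_S=X|_S$ passes the agreement test with $X_F$, that $\mathbf{k}^S\le\mathbf{k}^L$ gives $\mathbf{k}^R\le\mathbf{k}$ for monotonicity, and that the outer loop terminates.
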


\section{Enumerating Balanced Separators}\label{sec:enumerate}

The rest of the work is devoted to the proof of \Cref{thm:main}, that comes in several sections.
In this section, we introduce and study the \texttt{EnumerateSeparator} procedure.

As noted in the overview, any \EF allocation can be represented as a convex
polyhedron $P$ in $\mathbb{R}^3$, and the graph of $P$ is planar.
The procedure \EnumerateSeparator enumerates candidate $(S, X_S)$, where $X_S$ is a vector of separator agents with a partial allocation for it, such that the cycle $C$ induced by $X_S$ on $P$ is a valid planar cycle separator and $S$ is set of unique agents, presented in $X_S$. If an \EF allocation exists, the correct separator is guaranteed to be yielded at some point during enumeration.
The pseudocode is present in \Cref{alg:EnumerateSeparator}.

\begin{algorithm}
\small
\caption{\EnumerateSeparator\ (generator): enumerates candidate separators,
resuming from the last yielded value on each call.}
\label{alg:EnumerateSeparator}
\begin{algorithmic}[1]
\Require $A \subseteq [n]$ --- set of agents
\Require $\mathbf{k} = (k_1, k_2, k_3)$ --- number of items of each type
\Statex
\For{$\ell = 0,\ 1,\ \ldots,\ \lceil\sqrt{8|A|}\rceil$}
    \For{each $X_S \in \bigl( [|A| + 6] \times [k_1] \times [k_2] \times [k_3]\bigr)^{4\ell + 1}$}
        \If{$\CheckAllocation(X_S,\ \mathbf{k})$}
            \State $S\gets \text{unique agents, presented in $X_S$}$
            \State \Yield $(S, X_S)$
            \Comment{suspend here; next call resumes at this point}
        \EndIf
    \EndFor
\EndFor
\State \Yield \None
\end{algorithmic}
\end{algorithm}

We move on to the discussion and formal analysis of how the procedure works and theoretical facts it relies on.

\textbf{Bounding planes.} 
To handle our convex polyhedron properly, we require that it is bounded.
Note that any bundle point $x_i$ lies in a cube with extremal points $(0,0,0)$ and $(k_1,k_2,k_3)$.
We call the halfspace sequence $(\{x \geq 0\}, \{y \geq 0\}, \{z \geq 0\}, \{x\leq k_1\}, \{y \leq k_2\}, \{z \leq k_3\})$ the \emph{bounding planes vector}. These planes are used to complete set of agents planes, so the ``envy-freeness polyhedron'' is always bounded.
They can appear as a part of the separator; that is why $S$ is a suset of $A$ plus $6$ additional imaginary ``agents''.

\paragraph{Description of \CheckAllocation.}
The predicate $\CheckAllocation(X_S, \mathbf{k})$ verifies whether $X_S$ can serve as a valid separator in some \EF allocation. $X_S$ represents vector of planes, forming separator edges an vertices.

Note that here $X_S$ technically is a sequence of planes and points, and not just a partial allocation as we treated is in previous sections.
This is only a minor technical detail used by \texttt{EnumerateSeparator} and \texttt{Divide}.
$\CheckAllocation$ works as follows.

First, it checks that $X_S$ respects the item counts: $\sum_{i \in S\cap A}
(X_S)_{i} \leq \mathbf{k}$. If this fails, it returns \False.
We also check consistency here, if $i$ presented in $X_S$ in tuple $(i, (x_{i,1}, x_{i,2}, x_{i,3}))$ any other appearance in vector should happen with exactly same item allocation. Otherwise we return \False.

Next, for each agent with allocation $(i,  (x_{i,1}, x_{i,2}, x_{i,3})) = (X_S)_i$,
it constructs the hyperplane with normal $v_i$ (the valuation vector
of $i^\text{th}$ agent in $A$) passing through the point $(x_{i,1}, x_{i,2}, x_{i,3})$. If $i > |A|$ we take element with index $i-6$ in bounding planes vector and ignore the point itself.

For each $i$  we take hyperplane $H_i$ constructed in step above. Now we have sequence $V$, which contains $4\ell + 1$ hyperplanes. Let us split first  $4\ell$ hyperplanes into $\ell$ groups, each one formed of 4 consecutive elements.
For each
group, the four corresponding hyperplanes are intersected; if the
intersection is not a single point, the check fails.

The resulting $\ell$ intersection points are connected in their natural
order to form a candidate cycle $C$ (a closed polygonal chain in 3D space). If $C$ is not a simple polyline, the check fails.

Let $P'$ be the convex polyhedron formed by all hyperplanes $\{H_i\}_{i
\leq 4\ell + 1}$. The check then verifies that: every vertex of $C$ is a vertex of
$P'$; every edge of $C$ is either an edge of $P'$ or a  diagonal of a face of
$P'$; all $\ell$ hyperplanes appear as faces of $P'$; and each face of $P'$
contains at most one diagonal of $C$. Finally, it checks that the hyperplane
$H_{4\ell+1}$ (last plane in $V$) contains no vertex of
$C$. \label{def:separator_cycle}
$C$ is called separator cycle for $X_S$. If all checks pass, $\CheckAllocation$ returns \True.

\paragraph{Correctness.}

Correctness proof in this section relies on a classical theorem, established in~\cite{miller1986finding}.
\begin{theorem}[Miller]
\label{thm:miller}
Every maximal planar graph on $n$ vertices admits a simple cycle separator
of length at most $\sqrt{8n}$ such that each of the two regions of the
induced planar embedding contains at most $\frac{2}{3}n$ vertices.
\end{theorem}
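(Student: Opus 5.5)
This statement is Miller's classical theorem~\cite{miller1986finding}, so the plan is to reconstruct its standard proof rather than derive it from anything earlier in the paper. Throughout, $G$ is a maximal planar graph, so every face of $G$ (including the outer one) is a triangle. We may assume $G$ is weighted by putting weight $1/n$ on each vertex, and the goal is a simple cycle $C$ with $|C|\le\sqrt{8n}$ such that the vertices strictly inside $C$ weigh at most $2/3$ and so do the vertices strictly outside.

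\textbf{Step 1: BFS levels.} Take a root $r$ and the breadth-first levels $L_0,L_1,\dots,L_h$. Because $G$ is a triangulation, the subgraph induced on $\bigcup_{j\le i}L_j$ is bounded by a family of vertex-disjoint simple cycles, each lying in level $L_i$. Each such cycle separates the levels $<i$ from the levels $>i$ within its region. Choose the median level $i_1$, meaning the one at which the cumulative weight first exceeds $1/2$. Then pick the largest level $i_0\le i_1$ with $|L_{i_0}|\le\sqrt{n}$ roughly, and the smallest level $i_2> i_1$ with the same property.

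A counting argument gives
\[
(i_2-i_0-1)\cdot\sqrt{n}\le n,
\]
together with a finer weighted version: the cost $|L_{i_0}|+|L_{i_2}|+2(i_2-i_0)$ can be kept at most about $\sqrt{8n}$ by choosing the thresholds optimally. This optimization is where the constant $\sqrt{8}$ comes from.

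\textbf{Step 2: the region between the levels.} If the part strictly between $i_0$ and $i_2$ has weight at most $2/3$, a simple cycle from the level-$i_0$ or level-$i_2$ boundary already separates, and we are done.

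Otherwise, contract everything at levels $\le i_0$ to a single vertex and delete everything beyond $i_2$, retriangulating. The result is a planar graph of radius at most $i_2-i_0$. Take a BFS spanning tree $T$ of it. For any non-tree edge $e$, the fundamental cycle $C_e$ of $e$ in $T$ has length at most $2(i_2-i_0)+1$.

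\textbf{Step 3: the fundamental-cycle argument (main obstacle).} This is Lipton--Tarjan style, but adapted to produce a \emph{simple cycle}. We must show that some fundamental cycle is $2/3$-balanced in both its interior and its exterior. Start with any non-tree edge and measure the weight inside $C_e$. If the inside is too heavy, walk to the adjacent triangle lying inside the cycle. Replacing $e$ by one of the other two edges of that triangle gives a new fundamental cycle whose interior strictly shrinks, with either a triangle or a single vertex removed. Each such step moves at most one vertex of weight $1/n$, so the interior weight cannot jump across the $[1/3,2/3]$ window.

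The delicate part is tracking the case where the third vertex of the triangle lies on $C_e$, as opposed to inside it. I would handle this by the standard case analysis on triangles incident to the cycle. That analysis is the step I expect to be hardest to make watertight.

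\textbf{Step 4: lifting back.} Finally, lift the balanced cycle back to $G$. The contracted vertex is replaced by a path along the level-$i_0$ boundary cycle, keeping the cycle simple and its length within $|L_{i_0}|+|L_{i_2}|+2(i_2-i_0)$. Combining this with the bound from Step 1 gives length at most $\sqrt{8n}$, and the balance of each side is at most $\tfrac{2}{3}n$.
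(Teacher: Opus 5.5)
The paper gives no proof of this statement; it quotes it from Miller. Your sketch therefore has to stand on its own as a reconstruction. The architecture is the right one (BFS levels, a fundamental cycle in a piece of small radius, lifting back), but Steps 2 and 4 contain a genuine gap.

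You import the Lipton--Tarjan move of contracting the levels $\le i_0$ and deleting everything beyond $i_2$. In Lipton--Tarjan this move is harmless only because the separator contains all of $L_{i_0}\cup L_{i_2}$, which walls off the contracted and deleted parts so they can be distributed afterwards. Your separator is a single cycle, so after lifting the following regions each lie wholly on one side of it:
the part outside the spliced-in level-$i_0$ cycle (all lower levels and every other branch at level $i_0$), and each pocket beyond level $i_2$. This weight is invisible to the balancing in Step 3. It can all land on a side already carrying nearly $\tfrac{2}{3}$, so the claim that both sides are at most $\tfrac{2}{3}n$ is unsupported.

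What is missing has three parts. Keep each discarded region as a face carrying that region's weight; this is what the face weights in Miller's formulation are for. Run the fundamental-cycle argument with these weights. When lifting, choose which arc of the boundary cycle to splice in so that the region lands on the side where its face was counted. Relatedly, ``retriangulating'' the deleted pockets creates chords that are not edges of $G$, and the fundamental cycle may use them. Step 4 only replaces the contracted vertex, so the lifted object need not even be a cycle of $G$; replacing such chords by arcs of pocket boundaries is where a $|L_{i_2}|$ term genuinely enters.

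Other steps also fail as stated.

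\emph{Step 2.} The easy case is false. Level $i_0$ can consist of several cycles, not necessarily vertex-disjoint (contrary to Step 1), each enclosing less than $1/3$ of the weight. For example, take three pockets of weight $0.2$ around a ball of weight $0.4$: no single level cycle is balanced. The grouping that makes this case trivial for Lipton--Tarjan again needs the non-cycle $L_{i_0}\cup L_{i_2}$.

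\emph{Step 3.} A move does not always remove ``a triangle or a single vertex.'' When neither other edge of the triangle is a tree edge, the heavy side splits into two pieces along the tree path from the third vertex. The weight can then drop by an arbitrary amount, so the `cannot jump across the window' reasoning breaks down. The correct argument goes by minimality:
take a fundamental cycle with a side of weight above $2/3$ that has the fewest faces on that side. Minimality caps the corresponding sides of the two new cycles at $2/3$. The two new cycles together with those sides cover the old cycle and its heavy side, so the one carrying more weight leaves less than $2/3$ on its other side. Fundamental cycles are automatically simple; simplicity is at stake only in the lifting.

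\emph{Constant.} The Lipton--Tarjan choice of levels gives $|L_{i_0}|+|L_{i_2}|+2(i_2-i_0-1)\le 2\sqrt{2n}$. Your length bound $|L_{i_0}|+|L_{i_2}|+2(i_2-i_0)$ therefore yields only $\sqrt{8n}+2$, not the stated $\sqrt{8n}$.
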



Now we can formulate the correctness lemma, which further explains why point 2 of \Cref{thm:main} holds.
\begin{lemma}
\label{thm:enumerate}
If an \EF allocation $X$ for agent set $A$ with item counts
$\mathbf{k} = (k_1, k_2, k_3)$ exists, and $P$ is the bounded convex polyhedron
representing $X$, then \EnumerateSeparator\ yields $X_S$ such
that the induced separator cycle $C$ on $P$ satisfies:
\begin{enumerate}
    \item every vertex of $C$ is a vertex of $P$;
    \item every edge of $C$ is either an edge of $P$ or a face diagonal of
          $P$, with at most one diagonal per face;
    \item $|C| \leq \sqrt{8|A|}$;
    \item $C$ is a cycle separator of $P$ viewed as a planar graph;
    \item for the polyhedron $P'$ formed by the hyperplanes of $S$: an edge
          $e$ of $C$ is an edge of $P'$ if and only if it is an edge of $P$.
\end{enumerate}
\end{lemma}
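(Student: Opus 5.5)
The plan is to build the polyhedron $P$ from $X$, apply \Cref{thm:miller} to a triangulation of its graph, and translate the resulting cycle back into a sequence of planes that \EnumerateSeparator\ is bound to list.

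\emph{Construction of $P$ and its graph.} Fix the \EF allocation $X$. Let $P$ be the intersection of the halfspaces $\{y : \langle v_i, y - x_i\rangle \le 0\}$ over $i\in A$ with the six bounding halfspaces. Every bundle $x_j$ lies in $P$, because envy-freeness gives $\langle v_i, x_j\rangle \le \langle v_i, x_i\rangle$; hence $P$ is nonempty and bounded.

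\emph{The face--vertex incidence graph.} I first assume nondegeneracy, i.e.\ $P$ is three-dimensional and simple, and handle degenerate cases by a perturbation and limit argument. Consider the graph $G$ whose vertices are the faces of $P$ (one per agent whose plane supports a face, plus possibly the bounding planes) together with the vertices of $P$. Connect a face to each vertex on its boundary, and connect consecutive vertices along every face.

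\emph{Applying Miller's theorem.} Triangulating each face by diagonals from one of its vertices produces a maximal planar graph $G'$. The vertices of $G'$ relevant to the separator count are the facets, of which there are at most $|A|+6$. Applying \Cref{thm:miller} to the dual-type triangulation in which facets are vertices (the graph of \Cref{fig:3d}), with $n\le |A|+6$, gives a simple cycle of facets of length at most $\sqrt{8n}$ whose sides each contain at most $\frac23$ of the facets. I then convert this cycle of facets into a cycle $C$ through vertices of $P$: each consecutive pair of adjacent facets shares an edge of $P$, and the walk can be routed either along edges of $P$ or across a face by a single diagonal. This yields properties 1 and 2, with at most one diagonal per face since each facet is visited once on a simple cycle. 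Property 3 follows because each step of $C$ is charged to one facet of the Miller cycle; the additive $+6$ is absorbed by the $\lceil\cdot\rceil$ and the slack in the loop bound. Property 4 is inherited from the balance guaranteed by \Cref{thm:miller}.

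\emph{Encoding $C$ for the enumeration.} Each vertex of $C$ is the intersection of the (at least three) facet planes through it. I encode it by four planes: the three incident facets plus a repeated one, or in degenerate situations a fourth facet. Each plane is represented by its agent index and bundle $x_i$, or by an index larger than $|A|$ for a bounding plane. The final $(4\ell+1)$-st plane is chosen as a facet of $P$ not touching $C$, which serves as a side marker for \Divide. Since all these planes and bundles occur among the tuples enumerated, $X_S$ appears in the loop. \CheckAllocation\ accepts it: item counts are respected because $X_S$ is a restriction of $X$, and the geometric conditions hold by construction.

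\emph{The main obstacle.} The hardest part is property 5 together with the checks involving $P'$, which is built only from the planes of $S$. Since $P'\supseteq P$, every vertex of $C$ is still the intersection of its planes and remains a vertex of $P'$. For an edge $e=uv$ of $C$, I will argue as follows. If $e$ is an edge of $P$, then $u$ and $v$ share two facet planes, both in $S$, so $e$ lies on the corresponding edge line of $P'$. Convexity, together with the fact that $u$ and $v$ are vertices of $P'$, forces $e$ to be an edge of $P'$. Conversely, if $e$ is a diagonal of a face of $P$, then $u$ and $v$ share exactly one plane. To rule out $e$ becoming an edge of $P'$, I would use the four-plane encoding to ensure that the two endpoints never share two planes in $S$. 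I expect this part to need care, possibly by adding the face's extra planes to $S$, and degenerate coplanarities will have to be excluded by perturbation.
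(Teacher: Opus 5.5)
You have a genuine gap exactly where you place it: the converse half of property~5 is left open. The remedies you consider (re-choosing the four-plane encoding, adding planes to $S$, perturbing) are unnecessary.

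The missing observation is that every plane of $S$ is a \emph{supporting} plane of $P$. An agent's plane passes through its bundle $x_i\in P$ and, by envy-freeness, has all of $P$ on one side; a bounding plane either supports $P$ or misses it. A supporting plane through a point of the relative interior of a face $F$ must contain $F$, so it equals $H_F$. A diagonal $e$ of $F$ runs through the relative interior of $F$, so $H_F$ is the only plane of $S$ containing $e$. Every edge of the three-dimensional polyhedron $P'\supseteq P$ lies on two distinct facet planes, both taken from $S$, so $e$ is not an edge of $P'$.

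This is the paper's (terse) argument that ``only the hyperplane of $F$ contains $e$.'' It also removes your worry: the two endpoints of a diagonal can never share two distinct planes of $S$, whatever encoding is used. Your forward direction, and the claim that vertices of $C$ remain vertices of $P'$, are fine.

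The second gap is in your route through the dual graph, which is where you depart from the paper. Miller's theorem as quoted needs a maximal planar graph, and the facet-adjacency graph of $P$ is one only when $P$ is simple. A perturbation-and-limit argument is not available as stated. $X_S$ must be built from the exact planes of actual agents through their actual integer bundles and must pass exact incidence tests in \CheckAllocation. When the perturbation is removed, distinct vertices of the perturbed cycle can collapse onto one vertex of $P$, so the limiting $C$ need not be simple, and steps can change status between edge and diagonal.

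Even for simple $P$, the conversion of the facet cycle $F_1,\dots,F_\ell$ into a vertex cycle is only asserted. You must fix which endpoint of each crossed edge $F_i\cap F_{i+1}$ is used (for instance, always the one on a fixed side of the dual curve). You must then prove that the resulting polyline is simple, and that facets off the dual cycle lie on the same side of $C$ as of the dual curve. Saying property~4 is ``inherited from Miller'' does not establish this.

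The paper avoids all of this by fan-triangulating the faces of $P$ itself, which gives a maximal planar graph for every $3$-polytope, simple or not, and applying Miller there. The separator is then already a cycle through vertices of $P$ along edges and diagonals. The only repair needed is that two fan diagonals of one face share the apex, hence are consecutive on the cycle, and can be replaced by the third side of their triangle.

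Your reason for preferring the dual is legitimate. Its vertex count is the number of facets, at most $|A|+6$, whereas $P$ may have up to $2|A|+8$ vertices, so a primal application of Miller yields $\sqrt{8|A|}$ only up to a constant factor. But if you keep the dual route, the conversion step and the degenerate case must actually be proved.
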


\begin{proof}
Property~3 holds immediately since the outer loop runs only up to
$\lceil\sqrt{8|A|}\rceil$.

Assume an \EF allocation $X$ exists with polyhedron $P$. Triangulate each
face of $P$ by connecting one vertex to all others via diagonals, making $P$
a maximal planar graph. By Theorem~\ref{thm:miller}, there exists a simple
cycle separator $C'$ with $|C'| \leq \sqrt{8|A|}$. Since $C'$ is a simple
cycle, at most two diagonals per face can appear in it. For each face
containing two diagonals, replace both with the third diagonal completing
the triangle; this reduces $|C'|$ while preserving simplicity, so
property~2 holds.

Now consider the iteration at $\ell = |C'|$. For each vertex $v$ of $C'$,
let $e_1, e_2$ be its two incident edges in $C'$. For each $e \in \{e_1,
e_2\}$, if $e$ is a diagonal of face $F$ take $F$ twice; otherwise take the
two faces of $P$ containing $e$. This yields four faces whose hyperplanes
intersect at exactly $v$ (if fewer than four distinct faces arise, add any
face containing $v$). Record the item allocation $(t_1, t_2, t_3)$ and
agent index $i$ for each face and append $(i, (t_1, t_2, t_3))$ to the
candidate vector. Connecting these points in order recovers $C'$ exactly,
so $\CheckAllocation$ returns \True on this iteration and \EnumerateSeparator\
yields the corresponding $X_S$.

Properties~1, 2, and~4 follow by construction of $C'$. For property~5:
any edge of $C'$ that is an edge of $P$ is bounded by at least two faces
of $P'$, so it is an edge of $P'$. Conversely, if $e$ is a diagonal of
face $F$ in $P$, then only the hyperplane of $F$ contains $e$ in $P'$,
so $e$ cannot be an edge of $P'$.
\end{proof}

\paragraph{Running time.}
We move on to giving an upper bound to the running time. We follow the same $p, K$ notation as before.
\begin{lemma}
\label{lem:enum-complexity}
The total runtime of \EnumerateSeparator\ throughout all \Yield calls is
$(pK)^{\Oh(\sqrt{p})}$.
\end{lemma}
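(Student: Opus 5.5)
The plan is to bound the runtime of \EnumerateSeparator by the number of candidate vectors $X_S$ it iterates over, multiplied by the cost of one call to $\CheckAllocation$. The generator is resumed across all \Yield calls, but every candidate is visited at most once in total. So the total work, summed over all resumptions, is the total number of candidates times the per-candidate cost, plus a polynomial overhead for suspending and resuming.

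\textbf{Counting candidates.} For fixed $\ell$, the inner loop ranges over $\bigl([p+6]\times[k_1]\times[k_2]\times[k_3]\bigr)^{4\ell+1}$. Each coordinate set has size at most $(p+6)K \le 7pK$, so this set has at most $(7pK)^{4\ell+1}$ elements. Summing over $\ell=0,\dots,\lceil\sqrt{8p}\rceil$ gives a geometric-type sum bounded by
\[
(\lceil\sqrt{8p}\rceil+1)\cdot(7pK)^{4\lceil\sqrt{8p}\rceil+1}.
\]
Since $4\lceil\sqrt{8p}\rceil+1 = \Oh(\sqrt p)$, the constant $7$ and the prefactor $\sqrt{8p}+2$ can be absorbed, and the count is $(pK)^{\Oh(\sqrt p)}$. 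Two routine remarks are needed here: $pK\ge 2$ whenever $p\ge 2$, and the ranges $[k_t]$ should be read as $\{0,\dots,k_t\}$.

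\textbf{Cost of $\CheckAllocation$.} Next I would argue that $\CheckAllocation$ runs in time polynomial in $p$, $\ell$ and the input bit size, hence $(pK)^{\Oh(1)}$. The steps are:
\begin{itemize}
\item summing the allocations and checking consistency takes $\Oh(\ell^2)$ arithmetic operations;
\item building $4\ell+1$ hyperplanes and intersecting each group of four is solving a constant-size linear system;
\item checking that the $\ell$-vertex polyline is simple is $\Oh(\ell^2)$ segment-intersection tests;
\item building the polyhedron $P'$ from $\Oh(\ell)$ halfspaces and checking the face, edge, vertex and diagonal conditions is $\Oh(\ell^3)$ by brute force: enumerate triples of planes for vertices, pairs for edges, and test incidences.
\end{itemize}
All numbers involved are rationals of polynomial bit length in the input, so each operation costs polynomial time.

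\textbf{Conclusion and main obstacle.} Multiplying the two bounds gives $(pK)^{\Oh(\sqrt p)}\cdot(pK)^{\Oh(1)} = (pK)^{\Oh(\sqrt p)}$. The only step needing care is the $P'$ construction. I would avoid invoking a sophisticated convex hull algorithm and instead use the cubic brute-force enumeration above: it is polynomial in $\ell\le\sqrt{8p}+1$, which is all that is needed. Bit-length growth is harmless because only constant-depth arithmetic (3×3 determinants) is ever performed.
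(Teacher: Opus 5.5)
Your proposal is correct and is essentially the paper's argument: bound the candidates at level $\ell$ by $((p+6)K)^{4\ell+1}$, charge each one polynomial-time \CheckAllocation call, and sum over $\ell\le\lceil\sqrt{8p}\rceil$. You are slightly more careful than the paper, since you read $[k_t]$ as $\{0,\dots,k_t\}$ and explicitly absorb the constant $7$ and the prefactor. One small point: your step ``polynomial in the input bit size, hence $(pK)^{\Oh(1)}$'' relies on the same unit-cost-arithmetic convention the paper uses implicitly when it writes $p^{\Oh(1)}$.
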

\begin{proof}
At iteration $\ell$, the number of candidate  $X_S$ is at most
$(k_1 \cdot k_2 \cdot k_3 \cdot (p + 6))^{4\ell + 1}$. Each call to $\CheckAllocation$
runs in $p^{\Oh(1)}$ time, since each condition involves only
elementary geometric operations (hyperplane intersection, planarity checks)
on $\ell\le p$ objects. Summing over all iterations:
\[
    \sum_{\ell=0}^{\lceil\sqrt{8n}\rceil}
    (K \cdot (p + 6))^{4\ell + 1} \cdot p^{\Oh(1)}
    \leq
     (Kp)^{4\sqrt{8p}+3} \cdot p^{\Oh(1)}.
\]
\end{proof}

\noindent To summarize, we note that point~\ref{prop:size} of \Cref{thm:main} follows from \Cref{thm:enumerate}; and \EnumerateSeparator runs efficiently (in subexponential time).

\section{Dividing the Agents}
\label{sec:divide}

In this section we describe the \Divide procedure, which takes the separator $X_S$ (as a sequence of hyperplanes) produced by \EnumerateSeparator and partitions the remaining agents into two balanced groups. The construction relies on the notion of the \emph{spherical image} of a convex polyhedron and certain of its structural properties.

\begin{definition}
Let $Q$ be a convex surface and let $\mathbb{S}$ denote the unit sphere in $\mathbb{R}^t$ for some $t \ge 1$.
For a set $M$ of points on $Q$, draw all support planes to $Q$ at each point of $M$ and translate their outward unit normals to the origin.
The set of endpoints of these normals on $\mathbb{S}$ is called the
\emph{spherical image} of $M$, denoted $s(M)$.
\end{definition}
We will use the following property of spherical images.

\begin{theorem}[Alexandrov~\cite{alexandrov2005spherical}.]
\label{thm:spherical}
The spherical image $s(P)$ of a convex polyhedron $P$ with $p$ vertices is a decomposition of
$\mathbb{S}$ into convex spherical polygons $s(v_1), \ldots, s(v_p)$
with pairwise disjoint interiors, where:
\begin{itemize}
    \item each polygon $s(v_i)$ is the spherical image of vertex $v_i$ of $P$;
    \item each common edge of $s(v_i)$ and $s(v_j)$ is the spherical image
          $s(e_{ij})$ of the interior of edge $e_{ij} = v_iv_j$ of $P$;
    \item each common vertex of polygons $s(v_{i_1}), \ldots, s(v_{i_t})$
          is the spherical image $s(f)$ of the interior of the face $f$ of $P$
          whose vertices are $v_{i_1}, \ldots, v_{i_t}$.
\end{itemize}
\end{theorem}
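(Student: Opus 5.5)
The plan is to derive the statement from the normal-fan (Gauss map) description of a convex polyhedron, i.e.\ via support functions. Let $P$ be a bounded convex polyhedron. For each unit vector $u \in \mathbb{S}$, the support plane of $P$ with outward normal $u$ touches $P$ in a face $F(u) = \arg\max_{x\in P}\langle u,x\rangle$. This face is a vertex, an edge, or a 2-face. The spherical image of a set $M \subseteq \partial P$ is then $\{u : F(u)\cap M \ne \emptyset\}$. I will work with the relatively open faces, writing $\operatorname{relint}$ for the relative interior, and use the fact that $\partial P$ is their disjoint union.

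\textbf{Step 1 (vertices).} For a vertex $v$ with incident facets having outward unit normals $n_1,\dots,n_r$, I show that $s(v)$ equals the normal cone $\mathrm{cone}(n_1,\dots,n_r)$ intersected with $\mathbb{S}$. The inclusion "$\supseteq$" is immediate, since every nonnegative combination of the $n_j$ is maximized at $v$. For "$\subseteq$", the key fact is that $P$ near $v$ equals the intersection of the facet halfspaces through $v$. By Farkas' lemma (LP duality), any $u$ maximized at $v$ is then a nonnegative combination of the active normals. The cone is pointed, because $v$ is a vertex, and it is full-dimensional, because the tangent cone is pointed. Hence $s(v)$ is a convex spherical polygon whose spherical vertices are the $n_j$, which are the images of the facets containing $v$.

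\textbf{Step 2 (covering and disjoint interiors).} Every $u$ attains its maximum over the compact set $P$ at some vertex, so the sets $s(v_i)$ cover $\mathbb{S}$. Suppose $u$ lies in the interior of $s(v_i)$. Then $u$ is a strictly positive combination spanning a full cone, so $F(u)=\{v_i\}$. Indeed, if $F(u)$ were larger, $u$ would be orthogonal to some edge direction $v_j - v_i$, while all the $n_j$ satisfy $\langle n_j, v_j - v_i\rangle \le 0$ with strict inequality for at least one of them. This forces $\langle u, v_j-v_i\rangle<0$, a contradiction. Consequently interiors are pairwise disjoint.

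\textbf{Step 3 (edges and faces).} For an edge $e=v_iv_j$, I take $u$ to be an interior point of $s(e_{ij})$. Then $F(u)=e$, and the set of such $u$ is the relatively open great-circle arc between the normals of the two facets adjacent to $e$; this is the same Farkas argument, now with a 2-dimensional normal cone. This arc is exactly $s(v_i)\cap s(v_j)$ minus its endpoints, since $u \in s(v_i)\cap s(v_j)$ if and only if $F(u)\supseteq\{v_i,v_j\}$ if and only if $F(u)$ is $e$ or a facet containing $e$. Similarly, for a facet $f$ the image of its relative interior is the single point $n_f$, and $n_f$ lies in $s(v)$ exactly for the vertices $v$ of $f$. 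This gives the common-vertex statement.

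The main obstacle is the "$\subseteq$" direction of Step 1, i.e.\ the local description of $P$ at a vertex together with the Farkas/duality argument. Everything else is bookkeeping on the face lattice. If a cleaner route is preferred, one can simply cite that the normal fan of a polytope is a complete polyhedral fan whose maximal cones correspond to vertices. Intersecting that fan with $\mathbb{S}$ yields the claimed decomposition directly.
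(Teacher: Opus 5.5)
The paper gives no proof of this statement; it imports it as a black box from Alexandrov's book. Your proposal therefore supplies an argument the paper does not contain, and it is correct in substance. It is the classical polarity argument (the spherical image of a vertex is the polar of the polyhedral angle there), recast in LP language. Farkas' lemma gives $s(v)=\operatorname{cone}(n_1,\dots,n_r)\cap\mathbb{S}$. Compactness gives the covering. The strictly-positive-combination trick gives $F(u)=\{v\}$ on interiors, and hence disjoint interiors.

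The citation buys brevity. Your route buys transparency about hypotheses that the statement leaves implicit. Boundedness is used for the covering, since every $u$ must be maximized at a vertex. Nonempty interior of $P$ is used for pointedness of the normal cones, and here you misattribute the reason. The normal cone at $v$ is pointed because $P$ is three-dimensional: if $\pm u$ both lay in it, then $P\subseteq v+u^{\perp}$. It is not pointed because $v$ is a vertex; vertexhood is what makes the normal cone full-dimensional, as you correctly say next. The hypothesis is genuinely needed: for a flat polygon in $\mathbb{R}^3$, every $s(v)$ is a lune with vertices at the two unit normals of the plane, not a convex spherical polygon. These are exactly the conditions one would have to check when the paper later applies the theorem to the polyhedron $P'$ cut out by a few separator hyperplanes.

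Three small repairs are needed.
\begin{enumerate}
\item In Step 3, the spherical image of the relative interior of $e$ is the \emph{closed} arc, because the two adjacent facets also support $P$ along $e$. By contrast, $\{u : F(u)=e\}$ is the open arc, so keep these two sets apart.
\item To match the statement as phrased, note that for non-adjacent $v_i,v_j$ the set $s(v_i)\cap s(v_j)=\{u : F(u)\supseteq[v_i,v_j]\}$ is empty or a single facet normal. So a common edge forces $v_iv_j$ to be an edge of $P$.
\item The claim that the $n_j$ are exactly the spherical vertices of $s(v)$, with none redundant, needs one more line. A positive combination of normals of other facets at $v$ exposes the intersection of those facets, and that intersection cannot be facet $j$ itself.
\end{enumerate}
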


\paragraph{High-level idea.} The high-level idea of \texttt{Divide} is to take the output of \texttt{EnumerateSeparator} in the form of a convex polyhedron and a closed polyline, construct the spherical image $\Gamma$ of the polyline and split the agents according to their valuation point placement relative to $\Gamma$ as a spherical curve.

\paragraph{Construction of $\Gamma$.}

Recall that \EnumerateSeparator yields a set of hyperplanes whose intersection
defines a convex polyhedron $P'$, and produces a cycle on the graph of
$P'$ that serves as a separator. We construct a corresponding curve $\Gamma$
on $\mathbb{S}$ as follows.

\begin{enumerate}
    \item Let $P'$ be the convex polyhedron formed by the hyperplanes fixed
          in \EnumerateSeparator.
    \item Compute the spherical image $s(P')$. By Theorem~\ref{thm:spherical},
          the image $s(v)$ of each vertex $v$ of $P'$ is a convex spherical
          polygon, and the image $s(e)$ of each edge $e$ of the separator cycle
          is either a geodesic arc (if $e$ is an edge of $P'$) or a single
          point (if $e$ is a face diagonal, in which case $s(e) = s(f)$ for
          the face $f$ containing $e$).
    \item For each edge $e$ of the separator cycle : if $s(e)$ is an arc,
          take its midpoint; if $s(e)$ is a point, take that point. Connect
          the resulting points in the cyclic order inherited from the separator
          cycle to obtain the polyline $\Gamma$ (see Figure~\ref{fig:fig2}).
    \item The curve $\Gamma$ divides $\mathbb{S}$ into two open regions.
          For each agent $i \in A \setminus S$, place
          its preference vector $u_i$ (the $i$-th row of $U$) on
          $\mathbb{S}$ and assign agent $i$ to $L$ or $R$ according to
          which region contains $u_i$.
\end{enumerate}

\begin{figure}[h!]
    \centering
    \usetikzlibrary{arrows}
\definecolor{qqqqff}{rgb}{0,0,1}
\definecolor{qqwwtt}{rgb}{0,0.4,0.2}
\definecolor{ttqqqq}{rgb}{0.2,0,0}
\definecolor{zzttqq}{rgb}{0.6,0.2,0}
\definecolor{qqwuqq}{rgb}{0,0.39,0}

\begin{tikzpicture}[line cap=round, line join=round, >=triangle 45, x=1.0cm, y=1.0cm]

\draw [thick] (-10.17, 10.46) -- node[left, xshift=-2pt] {$e_1$} (-11.28, 7.20);
\draw [thick, color=qqwuqq] (-11.28, 7.20) -- node[left, xshift=-2pt] {$e_2$} (-9.84, 3.67);
\draw [thick, color=zzttqq] (-9.84, 3.67) -- node[left, xshift=-2pt] {$e_3$} (-10.75, 0.26);

\foreach \x/\y in {-10.17/10.46, -11.28/7.20, -9.84/3.67, -10.75/0.26} {
    \fill [black] (\x,\y) circle (1.5pt);
}

\begin{scope}[xshift=-2.1cm]

    \draw [thick] (-4.24, 10.58) -- node[above] {$s(e_1)$} (0.30, 10.60);
    \draw [thick, color=qqwuqq] (-5.53, 6.86) -- node[above, xshift=15pt] {$s(e_2)$} (-0.23, 5.31);

    \draw [thick, color=qqqqff] (-1.97, 10.59) -- (-2.88, 6.09) node[left=5pt,yshift=-4pt] {$\Gamma$} -- (-1.62, 0.63);

    \node[color=zzttqq, right=4pt] at (-1.62, 0.63) {$s(e_3)$};

    \foreach \x/\y in {-4.24/10.58, 0.30/10.60, -5.53/6.86, -0.23/5.31} {
        \fill [black] (\x,\y) circle (1.5pt);
    }

    \fill [ttqqqq] (-1.97, 10.59) circle (1.5pt);
    \fill [qqwwtt] (-2.88, 6.09) circle (1.5pt);
    \fill [zzttqq] (-1.62, 0.63) circle (1.5pt);

\end{scope}

\end{tikzpicture}
    \caption{The part of separator cycle on the polyhedron $P'$ and its spherical
             image with $\Gamma$. $e_i$ are the edges of separator, $s(e_i)$ are their corresponding spherical images.}
    \label{fig:fig2}
\end{figure}

\begin{lemma}
\label{lem:regular}
The polyline $\Gamma$ is a simple closed curve on $\mathbb{S}$.
\end{lemma}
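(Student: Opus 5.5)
The plan is to use the duality in Theorem~\ref{thm:spherical}: a vertex $v$ of $P'$ corresponds to a face $s(v)$ of the spherical map, an edge $e$ to an arc $s(e)$, and a face $f$ to a point $s(f)$. The separator cycle $C$ is a simple closed walk $w_1,\dots,w_\ell$ through vertices of $P'$. For each edge $e_j=w_jw_{j+1}$, the point $\gamma_j$ chosen in the construction of $\Gamma$ lies on the common boundary of the faces $s(w_j)$ and $s(w_{j+1})$. If $e_j$ is an edge of $P'$, $\gamma_j$ is the midpoint of the shared arc $s(e_j)$. If $e_j$ is a diagonal of a face $f$, $\gamma_j=s(f)$ is a common vertex of $s(w_j)$ and $s(w_{j+1})$. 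So the consecutive points $\gamma_{j-1},\gamma_j$ both lie on the boundary of the single convex spherical polygon $s(w_j)$. I will take the segment of $\Gamma$ between them to be the geodesic arc, which lies inside $s(w_j)$ by convexity. Each such arc is well defined and short, since it has length less than $\pi$: a convex spherical polygon of a bounded polyhedron lies in an open hemisphere.

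\textbf{Step 1 (closedness).} This is immediate, since the points are joined in the cyclic order of $C$.

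\textbf{Step 2 (segments in distinct cells meet only at shared endpoints).} Segment $\sigma_j$ lies in $s(w_j)$. The vertices $w_j$ are pairwise distinct because $C$ is simple. By Theorem~\ref{thm:spherical}, distinct polygons $s(w_j), s(w_k)$ have disjoint interiors, so $\sigma_j\cap\sigma_k$ can only occur on their common boundary. I will show that $\sigma_j$ meets $\partial s(w_j)$ only at its endpoints, unless both endpoints lie on the same boundary edge. That degenerate case is excluded: it would force $e_{j-1}$ and $e_j$ to have the same image, i.e.\ to be the same edge or lie in the same face via two diagonals, contradicting the ``at most one diagonal per face'' check in $\CheckAllocation$ together with simplicity of $C$. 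So each $\sigma_j$ is a chord of the convex polygon $s(w_j)$ whose relative interior lies in the open interior, and two different chords in different cells can meet only at boundary points that are endpoints.

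\textbf{Step 3 (the points $\gamma_j$ are pairwise distinct).} This is the step I expect to be the main obstacle. Edge-type points $\gamma_j$ are midpoints of arcs whose relative interiors are disjoint for distinct edges, so they are distinct from each other. They are also distinct from vertex-type points, since those are vertices of the spherical map while a midpoint lies in the interior of an arc. Two diagonal-type points coincide only if the two diagonals lie in the same face $f$, which is forbidden by the check. The remaining danger is a vertex-type point $s(f)$ coinciding with an endpoint of some other edge's arc; this is harmless because the chosen point is the midpoint, not the endpoint. Once the $\gamma_j$ are distinct, Step~2 shows that non-consecutive segments are disjoint, and consecutive ones share only their common endpoint. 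Hence $\Gamma$ is a simple closed polyline, which is enough for the Jordan curve theorem on $\mathbb{S}$ used in the next step of \Divide.
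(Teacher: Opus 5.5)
Your route is the paper's own. You confine each segment $\sigma_j$ to the cell $s(w_j)$, use that cells of distinct cycle vertices have disjoint interiors, and show the points $\gamma_j$ are pairwise distinct; your Step~3 is fine. The gap is the sentence in Step~2 that excludes the degenerate case. It is not true that $\gamma_{j-1},\gamma_j$ lying on one boundary arc of $s(w_j)$ forces $e_{j-1}$ and $e_j$ to have the same image.

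Let $f=abcd$ be a quadrilateral face of $P'$, and let $C$ pass through $a,b,d$ consecutively. Then $ab$ is an edge of $P'$ and $bd$ is the only diagonal of $C$ in $f$; nothing in \texttt{CheckAllocation} forbids this. At $w_j=b$, $\gamma_{j-1}$ is the midpoint of $s(ab)$ and $\gamma_j=s(f)$ is an endpoint of that same arc. So $\sigma_j$ is half of the boundary arc $s(ab)$, lying in $\partial s(a)\cap\partial s(b)$ rather than in the interior of $s(b)$. Likewise, suppose both cycle edges at $w_j$ are diagonals of two faces sharing an edge $e$ incident to $w_j$. Then $\sigma_j=s(e)$ is a whole boundary arc. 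Your chord argument says nothing about such segments, so Step~2 does not yet show that segments in different cells meet only at common endpoints. (The paper's proof has the same hole: it simply asserts that every edge of $\Gamma$ lies in the interior of some $s(v_i)$.)

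The lemma survives with one extra case. Suppose $\sigma_j$ is not an interior chord. Then it is a sub-arc of a single arc $s(e)$ with $e=w_ju$. Its relative interior lies in the relative interior of $s(e)$, which meets only the cells $s(w_j)$ and $s(u)$. That relative interior contains no $\gamma$-point except possibly the midpoint of $s(e)$. In the first configuration that midpoint is $\gamma_{j-1}$, an endpoint of $\sigma_j$. In the second, $e$ cannot be a cycle edge, since $w_j$ would then have three. An interior chord also has no $\gamma$-point in its relative interior. Hence the only way two segments can meet away from a common endpoint is that the segment in $s(u)$ also lies along $s(e)$ and overlaps $\sigma_j$. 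Checking both configurations, this forces the two segments to end at the same point $s(f_i)$, with $f_i$ a face at $e$. That point would then be produced both by a diagonal of $C$ at $w_j$ and by one at $u$. These are different cycle edges, because $w_ju$ is an edge and not a diagonal, which contradicts your Step~3. With this case added, your Steps~2 and~3 do prove the lemma.
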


\begin{proof}[Proof of \Cref{lem:regular}]
We show that $\Gamma$ does not self-intersect and that every vertex of
$\Gamma$ has degree exactly~2.

\textbf{\textit{No self-intersection in edge interiors.}}
By \Cref{thm:spherical}, the interior of each spherical polygon
$s(v_i)$ contains no edges of $s(P')$.
Every edge of $\Gamma$ lies in the interior of some $s(v_i)$,
so no two edges of $\Gamma$ can cross in their interiors.
Moreover, two edges of $\Gamma$ cannot both lie inside the same polygon
$s(v_i)$: each edge of $\Gamma$ connects images of edges of the separator
cycle adjacent to a single vertex, so two edges inside the same $s(v_i)$
would imply that the corresponding vertex of the separator cycle has degree
at least~4, contradicting the fact that it is a simple cycle.

\textbf{\textit{Every vertex has degree~2.}}
Vertices of $\Gamma$ that are midpoints of arcs $s(e)$ are each visited
exactly once: visiting such a vertex twice would require the same edge
$e$ of $P'$ to appear twice in the separator cycle, which is impossible.
Vertices of the form $s(f)$ for a face diagonal are also each visited
at most once, because \EnumerateSeparator guarantees that every face of $P'$
contains at most one diagonal of the separator cycle; each visit corresponds
to traversing that diagonal, so it can occur at most once.
\end{proof}

\begin{lemma}
\label{lem:nocross}
Let $P$ be any convex polyhedron such that
\emph{(i)} the faces of $P$ include all faces of $P'$, and
\emph{(ii)} the separator cycle lies on the surface of $P$, with its vertices
being vertices of $P$ and its edges being edges or face diagonals of $P$,
consistently with their status in $P'$.
Then no edge of $s(P)$ crosses $\Gamma$, except for the images $s(e)$ of
the edges $e$ of the cycle.
\end{lemma}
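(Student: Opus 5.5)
We plan to show that $\Gamma$ is a union of short geodesic pieces, each confined to the closure of a single spherical polygon of $s(P)$, and to use the cell structure of \Cref{thm:spherical} for $P$. Adding faces to $P'$ refines $s(P')$ into $s(P)$. Concretely, the faces of $P$ correspond to vertices of $s(P)$, and every face of $P'$ is still a face of $P$ by (i). Since a supporting plane of $P'$ that is a face plane of $P$ is also supporting for $P$ at that face, each point $s(f)$ for a face $f$ of $P'$ is still a vertex of $s(P)$, namely the image of the same face $f$ of $P$.

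\textbf{Step 1: vertices of $\Gamma$.} Let $e$ be an edge of the separator cycle.
If $e$ is a face diagonal of $P'$, then by (ii) it is a diagonal of the same face $f$ of $P$. Its image is the single point $s(f)$, which is a vertex of $s(P)$.
If $e$ is an edge of $P'$, then by (ii) it is an edge of $P$ with the same two incident faces $f_1,f_2$, by (i). Hence $s_P(e)=s_{P'}(e)$ is the geodesic arc between $s(f_1)$ and $s(f_2)$. In particular the midpoint chosen for $\Gamma$ lies in the relative interior of the edge $s(e)$ of $s(P)$.

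\textbf{Step 2: edges of $\Gamma$.} Each edge of $\Gamma$ joins the images of two consecutive cycle edges $e,e'$ meeting at a cycle vertex $v$. By (ii), $v$ is a vertex of both $P'$ and $P$.
Both endpoints of this edge of $\Gamma$ lie on the boundary of $s_P(v)$: the midpoint of $s(e)$ lies on the edge $s(e)\subset\partial s_P(v)$, and a point $s(f)$ lies at a vertex of $s_P(v)$ because $v\in f$.
The polygon $s_P(v)$ is a convex spherical polygon of diameter less than $\pi$, namely the set of outward normals of supporting planes at $v$. Hence the geodesic segment joining these two boundary points stays in $s_P(v)$. It can leave the interior only if it runs along a single boundary edge. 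That cannot happen here, since $e\ne e'$ correspond to distinct boundary features of $s_P(v)$. The one degenerate case is when both $e$ and $e'$ are diagonals of the same face, and that is excluded by the ``at most one diagonal per face'' condition.

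\textbf{Step 3: conclusion.} An edge of $s(P)$ not among the $s(e)$ can meet a $\Gamma$-edge only at a boundary point of the convex cell $s_P(v)$ containing that $\Gamma$-edge. Such points are the endpoints of the $\Gamma$-edge, since its interior lies in the open cell. So an intersection can only happen at a vertex of $\Gamma$.
At a midpoint vertex, the only edge of $s(P)$ through it is $s(e)$ itself.
At a vertex $s(f)$, other edges of $s(P)$ meet the point but do not cross $\Gamma$ there. The two incident $\Gamma$-edges lie in the two cells $s(v),s(w)$ of the diagonal's endpoints. Around $s(f)$ the edges of $s(P)$ stay in the closures of cells other than the interiors of $s(v),s(w)$, so they touch $\Gamma$ only at $s(f)$ and do not pass from one side of it to the other.

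\textbf{Main obstacle.} The main obstacle is making this touching-versus-crossing argument rigorous at the diagonal points $s(f)$, and checking the claim in Step 2 that the geodesic chord stays strictly inside the cell. I would first handle Step 2 by proving that $s_P(v)$ is strictly convex as a set with respect to chords between non-adjacent boundary features. For Step 3 I would then use the cyclic order of the cells around $s(f)$, which by \Cref{thm:spherical} matches the cyclic order of the vertices of $f$, to show that the local picture at $s(f)$ is a transverse passage of $\Gamma$ through the two cells $s(v),s(w)$.
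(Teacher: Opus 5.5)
Your route is the paper's: put the relative interior of every edge of $\Gamma$ in an open cell $s_P(v)$, then examine the vertices of $\Gamma$. You are more careful than the paper in checking that the midpoints and the points $s(f)$ are the same for $s(P)$ as for $s(P')$.

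\textbf{The gap is in Step~2.} The relative interior of a geodesic chord between two boundary points of a convex spherical polygon lies in the open polygon unless the two points lie on a common \emph{closed} edge; in that case the chord runs along that edge. This happens whenever the two boundary features are incident, not only when they coincide. Two configurations allowed by the hypotheses (and by \CheckAllocation) realize it.
\begin{enumerate}
\item[(a)] $e=uv$ is an edge of $P$ and $e'=vw$ is a diagonal of a face $f'$ containing $e$ (say $f'=uvxw$). Then $s(f')$ is an endpoint of the arc $s(e)$, and the edge of $\Gamma$ at $v$ runs along half of $s(e)$.
\item[(b)] $e=bv$ and $e'=vd$ are diagonals of faces $f=vabc$ and $f'=vcdg$ that share the edge $vc$. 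Then the edge of $\Gamma$ from $s(f)$ to $s(f')$ is the whole arc $s(vc)$, and $vc$ is not a cycle edge.
\end{enumerate}
The one-diagonal-per-face condition excludes neither case. The strict convexity you plan to prove for non-adjacent features does not help, because here the features are adjacent. In case (b), the conclusion you draw in Step~3 is false: you claim an edge of $s(P)$ other than the $s(e)$ meets $\Gamma$ only at vertices of $\Gamma$, yet $s(vc)$ lies on $\Gamma$. The paper's proof has the same hole, since it relies on the assertion in \Cref{lem:regular} that every edge of $\Gamma$ lies in the interior of some $s(v_i)$.

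\textbf{How to repair it.} The statement survives only if ``crosses'' means passing from one side of $\Gamma$ to the other. Your treatment of the points $s(f)$ already needs this reading, since edges of $s(P)$ touch $\Gamma$ there. To repair Step~2, prove a dichotomy: each edge of $\Gamma$ lies in the closed cell $s_P(v)$, and its relative interior lies either in the open cell or inside a single edge $E$ of $s(P)$.
\begin{itemize}
\item In the second case no other edge of $s(P)$ meets it, because edges of the decomposition meet only at common endpoints.
\item $E$ is then either $s(e)$ for a cycle edge $e$ (case (a)), or an arc whose endpoints are both vertices $s(f),s(f')$ of $\Gamma$, i.e.\ separator normals (case (b)). Such an arc lies on $\Gamma$ without switching sides.
\end{itemize}
This gives what the merge argument actually uses: an arc of $s(P)$ joining the normals of two non-separator agents does not meet $\Gamma$ at all.

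A minor point: your opening claim that $s(P)$ refines $s(P')$ is false in general. A new face can cut off a non-cycle edge of $P'$ and delete its arc. Nothing later in your argument depends on this claim.
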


\begin{proof}[Proof of \Cref{lem:nocross}]
    
By Lemma~\ref{lem:regular}, no edge of $s(P)$ crosses an edge of $\Gamma$
in an interior point, since every such interior point lies in the interior
of some $s(v_i)$, which contains no edges of $s(P)$. It remains to check
whether an edge of $s(P)$ passes through a vertex of $\Gamma$. By
Theorem~\ref{thm:spherical}, the only vertices of $\Gamma$ that can lie on
an edge of $s(P)$ are midpoints of arcs $s(e)$ for edges $e$ of $P'$.
Such a vertex lies on an edge of $s(P)$ only if it is the image of an
edge of $P$, which by construction means that edge belongs to the separator
cycle.
\end{proof}

\paragraph{Correctness.}

We show that any convex polyhedron (the complete solution) that extends $P'$ (partial allocation given by the separator) agrees with the division of the agents.
First, split algorithm is correct, since Lemma~\ref{lem:regular} holds, so $\Gamma$ divides $\mathbb{S}$ into two open regions by Jordan curve theorem. Next, by Lemma~\ref{lem:nocross} we can see that if we build $\Gamma$ on spherical image of any complete solution, we will get exactly same split, since $\Gamma$ and spherical images of agents planes does not change.

\paragraph{Balancedness.}

By Lemma~\ref{lem:regular} and the Jordan curve theorem, $\Gamma$ divides
$\mathbb{S}$ into exactly two open regions. By
Theorem~\ref{thm:spherical}, each face of $s(P)$ corresponds to a vertex
of $P$ and hence to an agent in $A$. Since the separator
cycle is produced by Miller's theorem (Theorem~\ref{thm:miller}), both
regions contain at least $\frac{1}{3}|A|$ faces of $s(P)$.

By Euler's formula $V - E + F = 2$, together with $E \geq \frac{3}{2}F$
(every face has at least 3 edges, every edge borders exactly 2 faces), we
obtain $V \geq \frac{1}{2}F$. Hence each region contains at least
$
    \frac{1}{2} \cdot \frac{1}{3}|A|
    = \frac{|A|}{6}$ free agents (agents in $A\setminus S$), giving
$
    |L \setminus S|,\; |R \setminus S| \;\geq\; \frac{|A|}{6} - |S| \geq \frac{|A|}{12},
$
so $
    |L|, |R| \leq\frac{11}{12}|A|.$
This unequality establishes the balancedness constant $C_2 = \frac{11}{12}$ in \Cref{thm:main}. $\frac{|A|}{6} - |S| \geq \frac{|A|}{12}$ is satisfied since if we run \Divide, number of agents is over 1200 and $|S| \leq \sqrt{8n} \leq \frac{1}{12}n$. The point~\ref{prop:balance} of \Cref{thm:main}
 follows immediately from the above reasoning.

\section{Merging Solutions from Two Parts}
\label{sec:merge}

In this section we prove that once \Divide has partitioned the agents into two groups and independent \EF allocations have been found for each part, their union is an \EF allocation.

\paragraph{Setup.}
Let $L$ and $R$ be the two parts produced by \Divide, with $L \cap R = S$ where $S$ is the separator set. Let $X_L$ and $X_R$ be \EF allocations for $L$ and $R$ respectively, where in each case the agents in $S$ are treated as fixed with allocation $X_S$. Define $X = X_L \cup X_R$.

\paragraph{Main argument.}
We show that $X$ is envy-free for all of $A$. Suppose for contradiction that $X$ is not \EF. Then there exist agents $a_L \in L \setminus S$ and $a_R \in R \setminus S$ such that $a_L$ envies $a_R$ under $X$.

Consider the convex polyhedron $P'$ formed by the hyperplanes of $S$ together with the hyperplanes $H_L$ and $H_R$ corresponding to $a_L$ and $a_R$ respectively. Assuming for now that $P'$ is non-degenerate (no two hyperplanes coincide), we can form its spherical image $s(P')$.

By the construction in \Divide, the curve $\Gamma = s(C)$ is a simple
closed curve on $\mathbb{S}$ that separates $s(H_L)$ from $s(H_R)$.
By Lemma~\ref{lem:nocross}, no edge of $s(P')$ crosses $\Gamma$ except for images of edges of $C$, so $a_L$ and $a_R$ are not adjacent in $P'$. It follows that every face of $P'$ adjacent to $H_R$ is a face of some agent in $S$, and each such face lies on the correct side of $H_L$. Therefore every point in the face corresponding to $a_R$ is not envied by $a_L$, contradicting the assumption.

It remains to show that $P'$ is always non-degenerate, i.e.\ that $H_L$ and $H_R$ cannot simultaneously make $P'$ singular.
Recall that \EnumerateSeparator yields, as its last element, a hyperplane
$H_F$ that contains no vertex of the separator cycle $C$
(last condition of \CheckAllocation). This hyperplane $H_F$ lies strictly on one side of $C$: for one of the two parts, say $L$, it lies on the same side as the non-separator faces of $X_L$; for the other part $R$, it lies on the opposite side.

Consequently, in the allocation $X_L$ all faces lie on the same side of $C$ as $H_F$, while in $X_R$ all faces lie on the opposite side. When we form the polyhedron determined by $S$ alone, the faces $H_L$ and $H_R$
therefore lie on opposite sides of $C$. Adding both $H_L$ and $H_R$ simultaneously cannot make $P'$ singular, since any singularity would require them to be on the same side, which is impossible.

We proved that $X = X_L \cup X_R$ is an \EF allocation for all of $A$, so the last point (point~\ref{prop:merge}) of \Cref{thm:main} holds. This completes the proof of Theorem~\ref{thm:main}.

\section{Conclusion}

In this work, we presented a subexponential algorithm for finding an envy-free allocation of indivisible items among $n$ agents when items come in three distinct types. The algorithm handles goods, chores, and mixed instances uniformly: agents may assign positive, negative, or mixed valuations to item types, and the \EF guarantee holds in all cases. Correctness is established via a geometric representation of \EF allocations as bounded convex polyhedra in $\mathbb{R}^3$, combined with a divide-and-conquer decomposition based on
Miller's planar cycle-separator theorem. We also extended the result to the setting with fixed agents, where a subset of allocations is given as part of
the input.

The present paper closes the open case $t = 3$, reducing the complexity from brute-force exponential to subexponential, and fits into a broader complexity
picture parameterised by the number of types: polynomial at $t = 2$, subexponential at $t = 3$, and $2^{\Omega(m)}$-hard under ETH for $t = \Omega(m)$.

\paragraph{Open problems and future directions.}
The most natural open question is whether our result generalises to an arbitrary number of item types. We conjecture that for any fixed number of types $t$, there exists an algorithm with time complexity
\[
    (nm)^{\Oh\left(n^{1 - \frac{1}{t-1}}\right)}.
\]
Proving this conjecture for general $t$ would likely require a higher-dimensional analogue of the planar separator technique, possibly drawing on separator theorems for $d$-dimensional polytopes or higher-genus surfaces.

Beyond the complexity question, several directions remain open.
\begin{itemize}
    \item \textbf{Tightness.} Is the exponent $\sqrt{n}$ optimal for $t = 3$,
    or can it be improved? A matching lower bound under standard complexity
    assumptions would strengthen the result.

    \item \textbf{Approximate fairness.} Can the geometric approach be adapted
    to compute EF1 or EFX allocations more efficiently than existing
    combinatorial methods, in particular for chores and mixed instances where
    EFX existence is still open?

    \item \textbf{Item-specific perturbations.} What happens when agents have
    valuations that are not purely type-based, but include item-specific
    additive perturbations? The convex-polyhedron representation breaks down in
    this setting, and new geometric tools would be needed.
\end{itemize}

\section{Acknowledgments}
This work was supported by the Ministry of Economic Development of the Russian Federation (IGK 000000C313925P4C0002), agreement №139-15-2025-010

\bibliographystyle{plain}
\bibliography{references}

@inproceedings{10.5555/3463952.3463988,
title = {High-Multiplicity Fair Allocation Made More Practical},
author = {Robert Bredereck and Aleksander Figiel and Andrzej Kaczmarczyk and Dušan Knop and Rolf Niedermeier},
year = {2021},
date = {2021-01-01},
booktitle = {Proceedings of the 20th International Conference on Autonomous Agents and Multiagent Systems (AAMAS '21)},
pages = {260–268},
publisher = {ACM},
pubstate = {published},
tppubtype = {inproceedings}
}

@article{bouveret2008efficiency,
  title={Efficiency and envy-freeness in fair division of indivisible goods: Logical representation and complexity},
  author={Bouveret, Sylvain and Lang, J{\'e}r{\^o}me},
  journal={Journal of Artificial Intelligence Research},
  volume={32},
  pages={525--564},
  year={2008}
}

@inproceedings{bliem2016complexity,
  title={Complexity of Efficient and Envy-Free Resource Allocation: Few Agents, Resources, or Utility Levels.},
  author={Bliem, Bernhard and Bredereck, Robert and Niedermeier, Rolf},
  booktitle={Proceedings of the 25th International Joint Conference on
               Artificial Intelligence (IJCAI)},
  pages={102--108},
  year={2016}
}

@article{miller1986finding,
  title={Finding small simple cycle separators for 2-connected planar graphs},
  author={Miller, Gary L.},
  journal={Journal of Computer and System Sciences},
  volume={32},
  pages={265--279},
  year={1986}
}

@inbook{alexandrov2005spherical,
  title={Convex Polyhedra},
  author={Alexandrov, Alexander D.},
  chapter={Spherical Image},
  publisher={Springer},
  address={Berlin, Heidelberg},
  year={2005}
}

@book{moulin2004fair,
  author    = {Herv{\'e} Moulin},
  title     = {Fair Division and Collective Welfare},
  publisher = {MIT Press},
  address   = {Cambridge, MA},
  year      = {2004}
}

@book{brandt2016handbook,
  editor    = {Felix Brandt and Vincent Conitzer and Ulle Endriss
               and J{\'e}r{\^o}me Lang and Ariel D. Procaccia},
  title     = {Handbook of Computational Social Choice},
  publisher = {Cambridge University Press},
  year      = {2016}
}

@article{steinhaus1948problem,
  author  = {Hugo Steinhaus},
  title   = {The problem of fair division},
  journal = {Econometrica},
  volume  = {16},
  pages   = {101--104},
  year    = {1948}
}

@inproceedings{lipton2004approximately,
  author    = {Richard J. Lipton and Evangelos Markakis
               and Elchanan Mossel and Amin Saberi},
  title     = {On approximately fair allocations of indivisible goods},
  booktitle = {Proceedings of the 5th ACM Conference on Electronic Commerce
               (EC)},
  pages     = {125--131},
  year      = {2004}
}

@article{budish2011combinatorial,
  author  = {Eric Budish},
  title   = {The combinatorial assignment problem: Approximate competitive
             equilibrium from equal incomes},
  journal = {Journal of Political Economy},
  volume  = {119},
  number  = {6},
  pages   = {1061--1103},
  year    = {2011}
}

@article{caragiannis2019unreasonable,
  author  = {Ioannis Caragiannis and David Kurokawa and Herv{\'e} Moulin
             and Ariel D. Procaccia and Nisarg Shah and Junxing Wang},
  title   = {The unreasonable fairness of maximum {Nash} welfare},
  journal = {ACM Transactions on Economics and Computation},
  volume  = {7},
  number  = {3},
  pages   = {12:1--12:32},
  year    = {2019}
}

@inproceedings{aziz2015fair,
  author    = {Haris Aziz and Simon Mackenzie},
  title     = {A discrete and bounded envy-free cake cutting protocol for
               any number of agents},
  booktitle = {Proceedings of the 57th IEEE Symposium on Foundations of
               Computer Science (FOCS)},
  pages     = {416--427},
  year      = {2016}
}

@book{cygan2015parameterized,
  author    = {Marek Cygan and Fedor V. Fomin and {\L}ukasz Kowalik
               and Daniel Lokshtanov and D{\'a}niel Marx
               and Marcin Pilipczuk and Micha{\l} Pilipczuk
               and Saket Saurabh},
  title     = {Parameterized Algorithms},
  publisher = {Springer},
  address   = {Cham},
  year      = {2015}
}

@inproceedings{mackin2016allocating,
  author    = {Erika Mackin and Lirong Xia},
  title     = {Allocating indivisible items in categorized domains},
  booktitle = {Proceedings of the 25 International Joint Conference on Artificial Intelligence (IJCAI)},
  year      = {2016}
}

@article{nguyen2014mnw,
author = {Nguyen, Trung Thanh and Rothe, Jörg},
year = {2014},
month = {12},
pages = {54-68},
title = {Minimizing envy and maximizing average Nash social welfare in the allocation of indivisible goods},
volume = {179},
journal = {Discrete Applied Mathematics},
doi = {10.1016/j.dam.2014.09.010}
}

@inproceedings{bhaskar2021chores,
  author    = {Umang Bhaskar and A.~R. Sricharan and Rohit Vaish},
  title     = {On approximate envy-freeness for indivisible chores and
               mixed resources},
  booktitle = {Approximation, Randomization, and Combinatorial Optimization
               (APPROX/RANDOM)},
  year      = {2021}
}

@inproceedings{aziz2021mixedmanna,
  author    = {Haris Aziz and Ioannis Caragiannis and Ayumi Igarashi
               and Toby Walsh},
  title     = {Fair allocation of indivisible goods and chores},
  booktitle = {Proceedings of the 28th International Joint Conference on
               Artificial Intelligence (IJCAI)},
  pages     = {53--59},
  year      = {2019}
}

@article{bogomolnaia2017dividing,
author = {Bogomolnaia, Anna and Moulin, Hervé and Sandomirskiy, Fedor and Yanovskaya, Elena},
year = {2016},
month = {10},
pages = {},
title = {Dividing Goods AND Bads Under Additive Utilities},
journal = {SSRN Electronic Journal},
doi = {10.2139/ssrn.2826474}
}

@article{EFX_PO, 
title={EFX and PO Allocation Exists for Two Types of Goods}, 
volume={40}, url={https://ojs.aaai.org/index.php/AAAI/article/view/38723}, 
DOI={10.1609/aaai.v40i20.38723}, 
number={20}, 
journal={Proceedings of the AAAI Conference on Artificial Intelligence}, 
author={Davidiuk, Vladimir and Dementiev, Yuriy and Ignatiev, Artur and Sagunov, Danil}, year={2026}, 
month={Mar.}, 
pages={16795–16802} }

\end{document}